\documentclass[12pt]{amsart}
\usepackage{amsthm}
\usepackage{amsmath}
\usepackage{amssymb}
\usepackage{amscd}

%%%%%%%%%%%%%%%%%%%%%%%%%%%%%%%%%%%%

%%%%%%%%%%%%%%%%%%%%%%%%%%%%%%%%%%%%
\newtheorem{theorem}[subsection]{Theorem}
\newtheorem{lemma}[subsection]{Lemma}
\newtheorem{proposition}[subsection]{Proposition}
\newtheorem{corollary}[subsection]{Corollary}

\theoremstyle{definition}

\theoremstyle{remark}

\newtheorem{example}[subsection]{Example}

%%%%%%%%%
%% Macros
%%%%%%%%%

\newcommand{\Id}{\operatorname{Id}}

\newcommand{\GL}{\operatorname{GL}}

\newcommand{\gl}{\operatorname{gl}}

\vspace{1.5cm}
\title{{\bf The strict AKNS hierarchy:\\its structure and solutions}}

\author{G.F. Helminck,\\
Korteweg-de Vries Institute\\
University of Amsterdam\\
P.O. Box 94248\\
1090 GE Amsterdam\\
The Netherlands\\
e-mail: g.f.helminck@uva.nl}

\pagestyle{empty}

\setlength{\textwidth}{15cm}
\setlength{\textheight}{21.6cm}

\subjclass{22E65, 34A34, 35F99, 35Q53, 37K10, 37K30, 58B25}
\keywords{AKNS equation, compatible Lax equations, AKNS hierarchy, strict version, zero curvature form, linearization, oscillating matrices, wave matrices, loop groups and algebras.}

\begin{document}
\maketitle

\begin{abstract}
In this paper we discuss an integrable hierarchy of compatible Lax equations that is obtained by a wider deformation of a commutative algebra in the loop space of ${\rm sl}_{2}$ than that in the AKNS-case
and whose Lax equations are based on a different decomposition of this loop space.  We show the compatibility of these Lax equations and that they are equivalent to a set of zero curvature relations. We present a linearization of the system and conclude by giving a wide construction of solutions of this hierarchy.
\end{abstract}

\section{\boldmath Introduction} 
\label{intro}

Integrable hierarchies often occur as the evolution equations of the generators of a deformation of a commutative subalgebra inside some Lie algebra $\mathfrak{g}$. The deformation and the evolution equations are determined by a splitting of $\mathfrak{g}$
in the direct sum of two Lie subalgebras, like in the Adler-Kostant-Symes Theorem, see \cite{A-vM-VH}. This gives then rise to a compatible set of Lax equations, a so-called hierarchy, and the simplest nontrivial equation often determines the name of the hierarchy. The hierarchy we discuss here corresponds to a somewhat different splitting of an algebra of ${\rm sl}_2$ loops that w.r.t. the usual decomposition yields the AKNS-hierarchy. Following the terminology used in similar situations, see \cite{HHP}, we use the name {\it strict} AKNS-hierarchy for the system of Lax equations corresponding to this different decomposition. 

The letters AKNS refer to the following fact: Ablowitz, Kaup, Newell and Segur showed, see \cite{AKNS}, that the initial value
problem of the following system of equations for two complex functions $q$ and $r$ depending of the variables $x$ and $t$:
\begin{align}
\label{akns1}
&i\dfrac{\partial}{\partial t}q(x,t) :=  iq_t =-\dfrac{1}{2} q_{xx} + q^2r\\ \notag
&i\dfrac{\partial}{\partial t}r(x,t) :=ir_t  =\dfrac{1}{2}r_{xx}-qr^2.
\end{align}
could be solved with the Inverse
Scattering Transform and that is why this system (\ref{akns1}) is called the AKNS-equations. It can be rewritten in a zero curvature form for $2 \times 2$-matrices as follows:
consider the matrices
$$
P_{1}=\left(
\begin{matrix}
0& q\\
r& 0
\end{matrix}
\right) \text{ and }P_{2}=\left(
\begin{matrix}
-\frac{i}{2}qr & \frac{i}{2}q_x \\
-\frac{i}{2}r_x & \frac{i}{2}qr
\end{matrix}
\right).
$$
A direct computation shows that there hold for them the following identities
$$
\dfrac{\partial}{\partial x}(P_2)=\left(
\begin{matrix}
\frac{i}{2}(q_{x}r+qr_{x})& \frac{i}{2} q_{xx}\\
-\frac{i}{2} r_{xx}& -\frac{i}{2}(q_{x}r+qr_{x})
\end{matrix}
\right) 
$$
and
$$
[P_{2},P_{1}]=\left(
\begin{matrix}
\frac{i}{2}(q_{x}r+qr_{x})& iq^{2}r\\
-iqr^{2}& -\frac{i}{2}(q_{x}r+qr_{x})
\end{matrix}
\right).
$$
By combining them, one sees that the AKNS-equations are equivalent to
\begin{equation}
\label{}
\label{akns2}
\dfrac{\partial}{\partial t}(P_1)-\dfrac{\partial}{\partial x}(P_2) 
-[P_2,P_1]=0
\end{equation}
In the next section we will show that equation (\ref{akns2}) is the simplest compatibility condition of a system of Lax equations that consequently carries the name AKNS-hierarchy, see \cite{FNR}. Besides that we introduce here also another system of Lax equations for a more general deformation of the basic generator of the commutative algebra. It relates to a different decomposition of the relevant Lie algebra. Since its analogue for the KP hierarchy is called the strict KP hierarchy, we use in the present context the term strict AKNS hierarchy. We show first of all that this new system is compatible and equivalent with a set of zero curvature relations. Further we describe suitable linearizations that can be used to construct solutions of both hierarchies. 
We conclude by constructing solutions of both hierarchies.

\section{The AKNS-hierarchy and its strict version}
\label{SAKNS}

We present here an algebraic description of the AKNS-hierarchy and its strict version that underlines the deformation character of these hierarchies as pointed at in the introduction. At the equation (\ref{akns2}) one has to deal with 2 $\times$ 2-matrices whose coefficients were polynomial expressions of a number of complex functions and their derivatives w.r.t. some parameters. The same is true for the other equations in both hierarchies. We formalize this as follows: our starting point is a commutative complex algebra $R$ that should be seen as the source from which the coefficients of the 2 $\times$ 2-matrices are taken. We will work in the Lie algebra ${\rm sl}_{2}(R)[z, z^{-1})$ consisting of all elements 
\begin{equation}
\label{lsl2}
X=\sum_{i=-\infty}^{N} X_{i}z^{i}, \text{ with all }X_{i} \in {\rm sl}_{2}(R)
\end{equation}
 and the bracket 
 $$
 [X,Y]=[\sum_{i=-\infty}^{N} X_{i}z^{i} , \sum_{j=-\infty}^{M} Y_{j}z^{j} ]:=\sum_{i=-\infty}^{N} \sum_{j=-\infty}^{M} [X_{i}, Y_{j}] z^{i+j}.
 $$
We will also make use of the slightly more general Lie algebra ${\rm gl}_{2}(R)[z, z^{-1})$, where the coefficients in the $z$-series from (\ref{lsl2}) are taken from ${\rm gl}_{2}(R)$ instead of ${\rm sl}_{2}(R)$ and the bracket is given by the same formula. In the Lie algebra ${\rm gl}_{2}(R)[z, z^{-1})$ we decompose elements in two ways. The first is as follows:
\begin{equation}
\label{eltdeco}
X=\sum_{i=-\infty}^{N} X_{i}z^{i}=\sum_{i=0}^{N} X_{i}z^{i} +\sum_{i=-\infty}^{-1} X_{i}z^{i}=:(X)_{\geqslant 0}+(X)_{<0}
\end{equation}
and this induces the splitting
\begin{equation}
\label{AKNSdeco}
{\rm gl}_{2}(R)[z, z^{-1})={\rm gl}_{2}(R)[z, z^{-1})_{\geqslant 0} \oplus {\rm gl}_{2}(R)[z, z^{-1})_{<0}, 
\end{equation}
where the two Lie subalgebras ${\rm gl}_{2}(R)[z, z^{-1})_{\geqslant 0}$ and ${\rm gl}_{2}(R)[z, z^{-1})_{<0}$ are given by
\begin{align*}
&{\rm gl}_{2}(R)[z, z^{-1})_{\geqslant 0}=\{ X \in {\rm gl}_{2}(R)[z, z^{-1}) \mid X=(X)_{\geqslant 0}=\sum_{i=0}^{N} X_{i}z^{i} \} \text{ and }\\
&{\rm gl}_{2}(R)[z, z^{-1})_{<0}=\{ X \in {\rm gl}_{2}(R)[z, z^{-1}) \mid X=(X)_{<0}=\sum_{i <0} X_{i}z^{i} \}.
\end{align*}
By restriction it leads to a similar decomposition for ${\rm sl}_{2}(R)[z, z^{-1})$, which is relevant for the AKNS hierarchy. 
The second way to decompose elements of 
${\rm gl}_{2}(R)[z, z^{-1})$ is: 
\begin{equation}
\label{eltdeco2}
X=\sum_{i=-\infty}^{N} X_{i}z^{i}=\sum_{i=1}^{N} X_{i}z^{i} +\sum_{i=-\infty}^{0} X_{i}z^{i}=:(X)_{>0}+(X)_{\leqslant 0}.
\end{equation}
This yields the splitting
\begin{equation}
\label{SAKNSdeco}
{\rm gl}_{2}(R)[z, z^{-1})={\rm gl}_{2}(R)[z, z^{-1})_{>0} \oplus {\rm gl}_{2}(R)[z, z^{-1})_{\leqslant 0}. 
\end{equation}
By restricting it to ${\rm sl}_{2}(R)[z, z^{-1})$, we get a similar decomposition for this Lie algebra, which relates, as we will see further on, to the strict version of the AKNS hierarchy. The Lie subalgebras ${\rm gl}_{2}(R)[z, z^{-1})_{> 0}$ and ${\rm gl}_{2}(R)[z, z^{-1})_{\leqslant 0}$ in (\ref{SAKNSdeco}) are defined in a similar way as the first two Lie subalgberas
\begin{align*}
&{\rm gl}_{2}(R)[z, z^{-1})_{> 0}=\{ X \in {\rm gl}_{2}(R)[z, z^{-1}) \mid X=(X)_{> 0}=\sum_{i=1}^{N} X_{i}z^{i} \} \text{ and }\\
&{\rm gl}_{2}(R)[z, z^{-1})_{\leqslant 0}=\{ X \in {\rm gl}_{2}(R)[z, z^{-1}) \mid X=(X)_{\leqslant 0}=\sum_{i \leqslant 0} X_{i}z^{i} \}.
\end{align*}

Both hierarchies correspond to evolution equations for deformations of the generators of a commutative Lie algebra in the first component.
Denote the matrix  $\left(
\begin{smallmatrix}
-i& 0\\
0&i
\end{smallmatrix}
\right)$ by $Q_{0}$. Inside ${\rm sl}_{2}(R)[z, z^{-1})_{\geqslant 0}$ this commutative complex Lie subalgebra $C_{0}$ is chosen to be the Lie subalgebra with basis the $\{Q_{0}z^{m} \mid m \geqslant 0 \}$ and in ${\rm sl}_{2}(R)[z, z^{-1})_{> 0}$ our choice will be the Lie subalgebra $C_{1}$ with the basis $\{Q_{0}z^{m} \mid m \geqslant 1 \}$.
In the first case that we work with $C_{0}$, we assume that the algebra $R$ possesses a set $\{ \partial_{m} \mid m \geqslant 0\}$ of commuting $\mathbb{C}$-linear derivations $\partial_{m}: R \to R$, where each $\partial_{m}$ should be seen as the derivation corresponding to the flow generated by $Q_{0}z^{m}$. The data $(R, \{ \partial_{m} \mid m \geqslant 0\})$ is also called a {\it setting} for the AKNS-hierarchy. In the case that we work with the decomposition (\ref{SAKNSdeco}), we merely need a set of commuting $\mathbb{C}$-linear  derivations $\partial_{m}: R \to R, m\geqslant 1,$ with the same interpretation. Staying in the same line of terminology, we call the data $(R, \{ \partial_{m} \mid m \geqslant 1 \})$ a {\it setting} for the strict AKNS-hierarchy.
\begin{example}
\label{E2.1}
Examples of settings for the respective hierarchies are the algebras of complex polynomials $\mathbb{C}[t_{m}]$ in the variables $\{ t_{m} \mid m \geqslant 0\}$ resp. $\{ t_{m} \mid m \geqslant 1\}$ or the formal power series $\mathbb{C}[[t_{m}]]$ in the same variables, with both algebras equipped with the derivations $\partial_{m}=\frac{\partial}{\partial t_{m}}, m \geqslant 0$, in the AKNS-set-up or $\partial_{m}=\frac{\partial}{\partial t_{m}}, m \geqslant 1$, in the strict AKNS-case.
\end{example}
We let each derivation $\partial_{m}$ occurring in some setting act coefficient wise on the matrices from ${\rm gl}_{2}(R)$ and that defines then a derivation of this algebra. The same holds for the extension to ${\rm gl}_{2}(R)[z, z^{-1})$ defined by 
$$
\partial_{m}(X):=\sum_{j=-\infty}^{N} \partial_{m}(X_{j})z^{j}.
$$

We have now sufficient ingredients to discuss the AKNS-hierarchy and its strict version. In the AKNS-case our interest is in certain deformations of the Lie algebra $C_{0}$ obtained by conjugating with elements of the group corresponding to ${\rm gl}_{2}(R)[z, z^{-1})_{<0}$. At the strict version we are interested
in certain deformations of the Lie algebra $C_{1}$ obtained by conjugating with elements from a group linked to ${\rm gl}_{2}(R)[z, z^{-1})_{\leqslant 0}$.

Note that for each $X \in {\rm gl}_{2}(R)[z, z^{-1})_{<0}$ the exponential map yields a well-defined element of the form
\begin{equation}
\label{exp}
\exp(X)=\sum_{k=0}^{\infty} \frac{1}{k!}X^{k}=\Id +Y, Y \in {\rm gl}_{2}(R)[z, z^{-1})_{<0}
\end{equation}
and with the formula for the logarithm one retreives $X$ back from $Y$. One verifies directly that the elements of the form (\ref{exp}) form a group w.r.t. multiplication and this
we see as the group $G_{<0}$ corresponding to ${\rm gl}_{2}(R)[z, z^{-1})_{<0}$. 

 In the case of the Lie subalgebra ${\rm gl}_{2}(R)[z, z^{-1})_{\leqslant 0}$, one cannot move back and forth between the Lie algebra and its group. Nevertheless, one can assign a proper group to this Lie algebra. A priori, the exponential $\exp(Y)$ of an element $Y \in {\rm gl}_{2}(R)[z, z^{-1})_{\leqslant 0}$, does not have to define an element in ${\rm gl}_{2}(R)[z, z^{-1})_{\leqslant 0}$. That requires convergence conditions. However, if it does, then it belongs to  
$$
G_{\leqslant 0}=\{ K=\sum_{j=0}^{\infty}K_{j}z^{-j} \mid \text{ all } K_{j}\in {\rm gl}_{2}(R), K_{0} \in {\rm gl}_{2}(R)^{*} \},
$$
where ${\rm gl}_{2}(R)^{*}$ denotes the elements in ${\rm gl}_{2}(R)$ that have a multiplicative inverse in ${\rm gl}_{2}(R)$. It is a direct verification that $G_{\leqslant 0}$ is a group and we see it as a proper group corresponding to the Lie algebra ${\rm gl}_{2}(R)[z, z^{-1})_{\leqslant 0}$. In fact, $G_{\leqslant 0}$ is isomorphic to the semi-direct 
product of $G_{<0}$ and ${\rm gl}_{2}(R)^{*}$.

Now there holds
\begin{lemma} 
\label{L2.1}
The group $G_{\leqslant 0}$ acts by conjugation on ${\rm sl}_{2}(R)[z, z^{-1})$.
\end{lemma}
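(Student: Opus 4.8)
The plan is to realize the conjugation action first at the level of the ambient associative algebra ${\rm gl}_{2}(R)[z,z^{-1})$ and then check that it restricts to the trace‑zero part. First I would note that ${\rm gl}_{2}(R)[z,z^{-1})$ is an associative unital algebra under the multiplication $(\sum_{i\leqslant N}X_{i}z^{i})(\sum_{j\leqslant M}Y_{j}z^{j})=\sum_{k}\big(\sum_{i+j=k}X_{i}Y_{j}\big)z^{k}$: for each fixed $k$ only finitely many pairs $(i,j)$ with $i\leqslant N$, $j\leqslant M$ contribute, the coefficient of $z^{k}$ is therefore a finite $R$‑linear combination of products of matrices from ${\rm gl}_{2}(R)$ and hence lies in ${\rm gl}_{2}(R)$, and the product is again bounded above in $z$‑degree (by $N+M$). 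The bracket written in the excerpt is exactly the commutator for this product. Every $K\in G_{\leqslant 0}$ involves only non‑positive powers of $z$, so $K\in{\rm gl}_{2}(R)[z,z^{-1})$, and since $G_{\leqslant 0}$ is a group (as already noted in the text) its inverse $K^{-1}$ again lies in $G_{\leqslant 0}\subset{\rm gl}_{2}(R)[z,z^{-1})$.

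Consequently, for $K\in G_{\leqslant 0}$ the assignment $c_{K}\colon X\mapsto KXK^{-1}$ is a well‑defined map from ${\rm gl}_{2}(R)[z,z^{-1})$ to itself; it is an algebra automorphism, hence a Lie algebra automorphism, and associativity gives $c_{K_{1}K_{2}}=c_{K_{1}}c_{K_{2}}$ and $c_{\Id}=\id$, so $K\mapsto c_{K}$ is a left action of $G_{\leqslant 0}$ on ${\rm gl}_{2}(R)[z,z^{-1})$ by Lie algebra automorphisms.

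It then remains to see that $c_{K}$ preserves the Lie subalgebra ${\rm sl}_{2}(R)[z,z^{-1})$. I would extend the trace coefficientwise to a $\mathbb{C}$‑linear map $\Tr\colon{\rm gl}_{2}(R)[z,z^{-1})\to R[z,z^{-1})$, $\Tr(\sum_{i}X_{i}z^{i})=\sum_{i}\Tr(X_{i})z^{i}$, so that ${\rm sl}_{2}(R)[z,z^{-1})$ is precisely $\ker\Tr$. Since $\Tr(AB)=\Tr(BA)$ for $2\times2$‑matrices $A,B$ over the commutative ring $R$, collecting the coefficient of each $z^{k}$ yields $\Tr(YZ)=\Tr(ZY)$ for all $Y,Z\in{\rm gl}_{2}(R)[z,z^{-1})$. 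Hence for $X\in{\rm sl}_{2}(R)[z,z^{-1})$ one gets $\Tr(c_{K}(X))=\Tr(KXK^{-1})=\Tr(K^{-1}KX)=\Tr(X)=0$, so $c_{K}(X)\in{\rm sl}_{2}(R)[z,z^{-1})$, which is the assertion.

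I do not expect a genuine obstacle here: the only point needing care is that all the $z$‑series occurring (the products $KX$, $KXK^{-1}$, and $K^{-1}$ itself) are legitimate elements of ${\rm gl}_{2}(R)[z,z^{-1})$, i.e. that the relevant coefficient sums are finite, and this is immediate from the fact that every element of ${\rm gl}_{2}(R)[z,z^{-1})$ is bounded above in $z$‑degree. The substantive inputs are thus merely the coefficientwise trace identity together with the already granted fact that $G_{\leqslant 0}$ is a group; the verification is otherwise routine bookkeeping.
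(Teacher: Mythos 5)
Your argument is correct, but it takes a genuinely different route from the paper. The paper splits $G_{\leqslant 0}$ according to its semi-direct product structure: for $g=\exp(X)\in G_{<0}$ it expands $gYg^{-1}=e^{\operatorname{ad}(X)}(Y)=Y+\sum_{k\geqslant 1}\frac{1}{k!}\operatorname{ad}(X)^{k}(Y)$ and observes that each $z$-coefficient of the correction terms is a commutator of elements of ${\rm gl}_{2}(R)$ with elements of ${\rm sl}_{2}(R)$, hence traceless; conjugation by the constant factor in ${\rm gl}_{2}(R)^{*}$ is then handled separately. You instead treat an arbitrary $K\in G_{\leqslant 0}$ uniformly inside the associative algebra ${\rm gl}_{2}(R)[z,z^{-1})$, verify that all the relevant products have finitely many contributions to each $z$-coefficient, and conclude via the coefficientwise trace and its cyclicity that $\Tr(KXK^{-1})=\Tr(X)=0$. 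What your approach buys is that it avoids both the exponential parametrization of $G_{<0}$ and the case split over the semi-direct product, and it makes the finiteness bookkeeping (which the paper leaves implicit in the convergence of the $\operatorname{ad}$-series) explicit; what the paper's approach buys is that it exhibits the conjugation as the exponential of an inner derivation, which is the structural picture used throughout for the deformations $Q=gQ_{0}g^{-1}$. Both proofs are complete and correct.
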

\begin{proof}
Take first any $g \in G_{<0}$. Then there is an $X \in {\rm gl}_{2}(R)[z, z^{-1})_{<0}$ such that $g=\exp(X)$. Since there holds for every $Y \in {\rm sl}_{2}(R)[z, z^{-1})$
$$
gYg^{-1}=\exp(X)Y \exp(-X)=e^{\text{ad}(X)}(Y)=Y+ \sum_{k=1}^{\infty} \frac{1}{k!} \text{ad}(X)^{k}(Y)
$$
and this shows that the coefficients for the different powers of $z$ in this expression are commutators of elements of ${\rm gl}_{2}(R)$ and ${\rm sl}_{2}(R)$ and that proofs the claim for elements from $G_{<0}$. Since conjugation with an element from ${\rm gl}_{2}(R)^{*}$ maps ${\rm sl}_{2}(R)$ to itself, the same holds for ${\rm sl}_{2}(R)[z, z^{-1})$. This proves the full claim.
\end{proof}
Next we have a look at the different deformations.
The deformations of $C_{0}$ by elements of $G_{<0}$ are determined by that of $Q_{0}$. Therefore we focus on that element and we consider for some $g=\exp(X)=\exp(\sum_{j=1}^{\infty}X_{j}z^{-j}) \in G_{<0}$ the deformation 
\begin{align}
\label{decoQ0}
\notag
Q&=gQ_{0}g^{-1}:=\sum_{i=0}^{\infty}Q_{i}z^{-i}\\
&=\left(
\begin{matrix}
-i& 0\\
0&i
\end{matrix}
\right) + [X_{1},Q_{0}]z^{-1} +\{ [X_{2},Q_{0}] + \frac{1}{2} [X_{1}, [X_{1}, Q_{0}]]\}z^{-2} +\cdots 
\end{align}
of $Q_{0}$. From this formula we see directly that, if each $X_{i}=\left(
\begin{matrix}
-\alpha_{i}& \beta_{i}\\
\gamma_{i}&\alpha_{i}
\end{matrix}
\right), i=1,2,$ then
\begin{equation}
\label{coefQ1}
Q_{1}=\left(
\begin{matrix}
0& 2i \beta_{1}\\
-2i \gamma_{i}&0
\end{matrix}
\right)=\left(
\begin{matrix}
0&q \\
r&0
\end{matrix}
\right),
\end{equation}
and
\begin{equation}
\label{}
Q_{2}=\left(
\begin{matrix}
q_{11}& q_{12}\\
q_{21}&q_{22}
\end{matrix}
\right)=\left(
\begin{matrix}
-2i \beta_{1} \gamma_{1}& 2i(\beta_{2}-\alpha_{1}\beta_{1}) \\
-2i(\gamma_{2}+\alpha_{1}\gamma_{1})&2i \beta_{1} \gamma_{1}
\end{matrix}
\right)
\end{equation}
In particular we get in this way that $q_{11}=-i\frac{qr}{2}$ and $q_{22}=i\frac{qr}{2}$. 
Since $Q$ is the deformation of $Q_{0}$ and $\Id z$ is central, the deformation of each $Q_{0}z^{m}, m \in \mathbb{Z} ,$, is given by $Qz^{m}$. The deformation of the Lie algebra $C_{1}$ by elements from $G_{\leqslant 0}$ are basically determined by that of the element $Q_{0}z$. So we focus on the deformation of this element. Using the same notations as at the deformation of $Q_{0}$ by $G_{<0}$, we get that the deformation of $Q_{0} z$ by a $Kg \in G_{\leqslant 0}$, with $K \in {\rm gl}_{2}(R)^{*}$ and $g \in G_{<0}$, looks like
\begin{align}
\label{sdecoQ0}
\notag
Z&=KgQ_{0}zg^{-1}K^{-1}:=\sum_{i=0}^{\infty}KQ_{i}K^{-1}z^{1-i}= \sum_{i=0}^{\infty}Z_{i}z^{1-i}\\
&=
Z_{0}z+ [KX_{1}K^{-1},Z_{0}] +
\cdots .
\end{align}
Consequently, the corresponding deformation of each $Q_{0} z^{m}, m \geqslant 1,$ is $Zz^{m-1}$.

Now we are looking for deformations $Q$ of the form (\ref{decoQ0}) such that the evolution w.r.t. the $\{ \partial_{m} \}$ satisfies: for all $m \geqslant 0$ 
\begin{equation}
\label{LaxAKNS}
\partial_{m}(Q)=[(Qz^{m})_{ \geqslant 0},Q]=-[(Qz^{m})_{ < 0},Q],
\end{equation}
where the second identity follows from the fact that all $\{ Qz^{m} \}$ commute.
The equations (\ref{LaxAKNS}) are called the {\it Lax equations of the AKNS hierarchy} and the deformation $Q$ satisfying these equations is called a {\it solution} of the hierarchy. Note that $Q=Q_{0}$ is a solution of the AKNS hierarchy and it is called the {\it trivial} one.
The AKNS equation (\ref{akns2}) occurs among the compatibility equations of this system, as we will see further on. Note that the equation (\ref{LaxAKNS}) for $m=0$ is simply $\partial_{0}(Q)=[Q_{0}, Q]$. Therefore, if $\partial_{0}=\frac{\partial}{\partial t_{0}}$ and both $e^{-it_{0}}$ and $e^{it_{0}}$ belong to the algebra $R$ of matrix coefficients, then we can introduce the loop $\hat{Q} \in {\rm sl}_{2}(R)[z, z^{-1})$ given by
$$
\hat{Q}:=\exp (-t_{0}Q_{0})Q \exp (t_{0}Q_{0})=\sum_{k \geqslant  0}\exp (-t_{0}Q_{0})Q_{k} \exp (t_{0}Q_{0})z^{-k},
$$
which is easily seen to satisfy $\partial_{0}(\hat{Q})=0$. This handles then the dependence of $Q$ of $t_{0}$.

Among the deformations $Z$ of the form (\ref{sdecoQ0}) we look for $Z$ such that their evolution w.r.t. the $\{ \partial_{m} \}$ satisfies: for all $m \geqslant 1$ 
\begin{equation}
\label{LaxSAKNS}
\partial_{m}(Z)=[(Zz^{m-1})_{ > 0},Z]=-[(Zz^{m-1})_{ \leqslant 0},Z],
\end{equation}
where the second identity follows from the fact that all $\{ Z z^{m-1} \}$ commute.
Since the equations (\ref{LaxSAKNS}) correspond to the strict cut-off (\ref{eltdeco2}), they are called the {\it Lax equations of the strict AKNS hierarchy} and the deformation $Z$ is called a {\it solution} of this hierarchy. Again there is al least one solution $Z=Q_{0}z$. It is called the {\it trivial} solution of the hierarchy.

For both systems (\ref{LaxAKNS}) and (\ref{LaxSAKNS}) one can speak of compatibility. There holds namely
\begin{proposition}
\label{P2.1}
Both sets of Lax equations (\ref{LaxAKNS}) and (\ref{LaxSAKNS}) are so-called compatible systems, i.e. the projections $\{ B_{m}:=(Qz^{m})_{\geqslant 0} \mid m \geqslant 0 \}$ of a solution $Q$ to (\ref{LaxAKNS}) satisfy the zero curvature relations
\begin{equation}
\label{ZCAKNS}
\partial_{m_{1}}(B_{m_{2}})-\partial_{m_{2}}(B_{m_{1}})-[B_{m_{1}},B_{m_{2}}]=0
\end{equation}
and the projections $\{ C_{m}:=(Zz^{m-1})_{> 0} \mid m \geqslant 1 \}$ corresponding to a solution $Z$ to (\ref{LaxSAKNS}) satisfy the zero curvature relations
\begin{equation}
\label{ZCSAKNS}
\partial_{m_{1}}(C_{m_{2}})-\partial_{m_{2}}(C_{m_{1}})-[C_{m_{1}},C_{m_{2}}]=0
\end{equation}
\end{proposition}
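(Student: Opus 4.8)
The plan is to establish both families of zero curvature relations by a single argument, exploiting that each $B_m$ (resp. $C_m$) is a Laurent polynomial in $z$ of bounded degree, whereas the complementary ``tail'' of $Qz^{m}$ (resp. of $Zz^{m-1}$) lies entirely on the other side of the relevant cut. The proof rests on two elementary facts. First, since each $\partial_m$ acts coefficientwise and $z$ is a formal central variable, $\partial_m$ commutes with multiplication by powers of $z$; hence the Lax equation $\partial_m(Q)=[B_m,Q]$ may be multiplied by $z^{k}$ to give $\partial_m(Qz^{k})=[B_m,Qz^{k}]$ for every $k$. Second, $[Q,Q]=0$ together with the centrality of $z$ gives $[Qz^{m_1},Qz^{m_2}]=0$; likewise $[Zz^{m_1-1},Zz^{m_2-1}]=0$.

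I would first treat the AKNS case. Abbreviate $A_m:=Qz^{m}=B_m+U_m$ with $B_m=(A_m)_{\geqslant 0}$ and $U_m=(A_m)_{<0}$. Applying $\partial_{m_1}$ to the decomposition $A_{m_2}=B_{m_2}+U_{m_2}$ and using the Lax equation in the form $\partial_{m_1}(A_{m_2})=[B_{m_1},A_{m_2}]$ yields
\[
\partial_{m_1}(B_{m_2})=[B_{m_1},B_{m_2}]+[B_{m_1},U_{m_2}]-\partial_{m_1}(U_{m_2}),
\]
and symmetrically with $m_1$ and $m_2$ interchanged. Subtracting these two identities and also subtracting $[B_{m_1},B_{m_2}]$, and then using $[A_{m_1},A_{m_2}]=0$ to rewrite $[B_{m_1},B_{m_2}]+[B_{m_1},U_{m_2}]+[U_{m_1},B_{m_2}]=-[U_{m_1},U_{m_2}]$, one is led to the identity
\[
\partial_{m_1}(B_{m_2})-\partial_{m_2}(B_{m_1})-[B_{m_1},B_{m_2}]
=-[U_{m_1},U_{m_2}]-\partial_{m_1}(U_{m_2})+\partial_{m_2}(U_{m_1}).
\]
The argument then concludes by inspecting where the two sides live. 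The left-hand side is a sum of brackets and $\partial$-images of $B_m=(Qz^{m})_{\geqslant 0}$, which is supported in the powers $z^{0},\dots,z^{m}$; since the $\partial_m$ preserve the $z$-grading and the bracket respects it, the left-hand side lies in ${\rm gl}_{2}(R)[z,z^{-1})_{\geqslant 0}$. On the other hand $U_m=(Qz^{m})_{<0}$ is supported in strictly negative powers, and this subspace is stable under the $\partial_m$ and under multiplication, so the right-hand side lies in ${\rm gl}_{2}(R)[z,z^{-1})_{<0}$. By the directness of the splitting (\ref{AKNSdeco}) both sides vanish, which is exactly (\ref{ZCAKNS}).

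For the strict version the argument is the same after replacing $A_m$ by $D_m:=Zz^{m-1}=C_m+V_m$, the projection $B_m$ by $C_m=(Zz^{m-1})_{>0}$ (supported in the powers $z^{1},\dots,z^{m}$), and $U_m$ by $V_m=(Zz^{m-1})_{\leqslant 0}$ (supported in powers $\leqslant 0$). The Lax equations (\ref{LaxSAKNS}) and the commutativity of the $\{Zz^{k}\}$ give the analogous identity for $\partial_{m_1}(C_{m_2})-\partial_{m_2}(C_{m_1})-[C_{m_1},C_{m_2}]$, whose left-hand side lies in ${\rm gl}_{2}(R)[z,z^{-1})_{>0}$ and whose right-hand side lies in ${\rm gl}_{2}(R)[z,z^{-1})_{\leqslant 0}$ — here one uses that ${\rm gl}_{2}(R)[z,z^{-1})_{\leqslant 0}$ is a subalgebra, exactly the closure-under-multiplication property that made $G_{\leqslant 0}$ a group, so that $[V_{m_1},V_{m_2}]$ again stays on the $\leqslant 0$ side. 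The directness of (\ref{SAKNSdeco}) then forces both sides to be $0$, which is (\ref{ZCSAKNS}). The only steps requiring any care are the algebraic bookkeeping that produces the displayed identity and the matching verification that the supports of $B_m$ and $U_m$ (resp. of $C_m$ and $V_m$) genuinely fall on opposite sides of the cut; everything else is routine.
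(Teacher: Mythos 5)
Your proposal is correct and follows essentially the same route as the paper: rewrite the projection as the full Lax operator minus its tail, use the Lax equations together with the commutativity of the $\{Qz^{m}\}$ (resp. $\{Zz^{m-1}\}$) to show the zero curvature expression equals $-[U_{m_1},U_{m_2}]-\partial_{m_1}(U_{m_2})+\partial_{m_2}(U_{m_1})$, and conclude by observing that this lies in the complementary subalgebra of the splitting, forcing it to vanish. The only (harmless) difference is notational: your $D_m=Zz^{m-1}$ clashes with the paper's later use of $D_m$ for $C_m-Zz^{m-1}$ in Corollary \ref{C1.1}.
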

\begin{proof}
The idea is to show that the left hand side of (\ref{ZCAKNS}) resp. (\ref{ZCSAKNS}) belongs to 
$$
{\rm sl}_{2}(R)[z, z^{-1})_{\geqslant 0} \cap {\rm sl}_{2}(R)[z, z^{-1})_{<0} \text{ resp. }
{\rm sl}_{2}(R)[z, z^{-1})_{> 0} \cap {\rm sl}_{2}(R)[z, z^{-1})_{\leqslant 0}
$$ 
and thus has to be zero. We give the proof for the $\{ C_{m} \}$, that for the $\{ B_{m} \}$ is similar and is left to the reader. The inclusion in the first factor is clear as both $C_{m}$ and $\partial_{n}(C_{m})$ belong to the Lie subalgebra ${\rm sl}_{2}(R)[z, z^{-1})_{ > 0}.$ 
To show the other one, we use the Lax equations (\ref{LaxSAKNS}). 
Note that the same Lax equations hold for all the $\{ z^{N}Z \mid N \geqslant 0 \}$
\begin{equation*}
\partial_{m}(z^{N}Z)=[(Zz^{m-1})_{ > 0},z^{N}Z].
\end{equation*}
By substituting $C_{m_{i}}=z^{m_{i}-1}Z-(z^{m_{i}-1}Z)_{<0}$ we get for
\begin{align*}
\partial_{m_{1}}(C_{m_{2}})-\partial_{m_{2}}(C_{m_{1}})=&\;\partial_{m_{1}}(z^{m_{2}-1}Z)-\partial_{m_{1}}((z^{m_{2}-1}Z)_{\leqslant 0})\\
 &\; -\partial_{m_{2}}(z^{m_{1}-1}Z)+\partial_{m_{2}}((z^{m_{1}-1}Z)_{\leqslant 0})\\
=&\; [C_{m_{1}},z^{m_{2}-1}Z]-[C_{m_{2}},z^{m_{1}-1}Z]\\
& \;-\partial_{m_{1}}((z^{m_{2}-1}Z)_{ \leqslant 0})+\partial_{m_{2}}((z^{m_{1}-1}Z)_{\leqslant 0})
\end{align*} 
and for 
\begin{align*}
[C_{m_{1}},C_{m_{2}}]=&\;[z^{m_{1}-1}Z-(z^{m_{1}-1}Z)_{\leqslant 0}, z^{m_{2}-1}Z-(z^{m_{2}-1}Z)_{\leqslant 0}]\\
=&\;-[(z^{m_{1}-1}Z)_{\leqslant 0}, z^{m_{2}-1}Z]+[(z^{m_{2}-1}Z)_{\leqslant 0}, z^{m_{1}-1}Z]\\
&\;+[(z^{m_{1}-1}Z)_{\leqslant 0},(z^{m_{2}-1}Z)_{\leqslant 0}].
\end{align*}
Taking into account the second identity in (\ref{LaxSAKNS}), we see that the left hand side of (\ref{ZCSAKNS}) is equal to 
$$
-\partial_{m_{1}}((z^{m_{2}-1}Z)_{\leqslant 0})+\partial_{m_{2}}((z^{m_{1}-1}Z)_{\leqslant 0})-[(z^{m_{1}-1}Z)_{\leqslant 0},(z^{m_{2}-1}Z)_{\leqslant 0}].
$$
This element belongs to the Lie subalgebra ${\rm sl}_{2}(R)[z, z^{-1})_{\leqslant 0}$ and that proves the claim.
\end{proof}
Reversely, we have
\begin{proposition} 
\label{P2.2}
Suppose we have a deformation $Q$ of the type (\ref{decoQ0}) or a deformation $Z$ of the form (\ref{sdecoQ0}). Then there holds:
\begin{enumerate}
\item Assume that the projections $\{ B_{m}:=(Qz^{m})_{\geqslant 0} \mid m \geqslant 0 \}$ satisfy the zero curvature relations (\ref{ZCAKNS}). Then $Q$ is a solution of the AKNS-hierarchy.
\item Similarly, if the projections $\{ C_{m}:=(Zz^{m-1})_{\geqslant 0} \mid m \geqslant 1 \}$ satisfy the zero curvature relations (\ref{ZCSAKNS}), then $Z$ is a solution of the strict AKNS-hierarchy.
\end{enumerate}
\end{proposition}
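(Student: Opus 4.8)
The plan is to prove both parts by the same mechanism; I describe part~(1), part~(2) being entirely parallel. Fix a deformation $Q$ as in (\ref{decoQ0}) and set
\[
D_{m}:=\partial_{m}(Q)-[B_{m},Q],\qquad B_{m}=(Qz^{m})_{\geqslant 0};
\]
the whole point is to show that $D_{m}=0$ for every $m\geqslant 0$, which is precisely the Lax equation (\ref{LaxAKNS}). As a preliminary observation, each $D_{m}$ lies in ${\rm sl}_{2}(R)[z,z^{-1})_{<0}$, and the bound on its $z$-degree is uniform in $m$: the $z^{0}$-coefficient of $Q$ is the constant matrix $Q_{0}$, so $\partial_{m}(Q)=\sum_{i\geqslant 1}\partial_{m}(Q_{i})z^{-i}$; and since all the $Qz^{k}$ commute we have $[Qz^{m},Q]=0$, whence $[B_{m},Q]=[\,Qz^{m}-(Qz^{m})_{<0},Q\,]=-[(Qz^{m})_{<0},Q]$, which again has only strictly negative powers of $z$.

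The main computation feeds this Lax data into the zero curvature expression. Substituting $B_{m_{i}}=Qz^{m_{i}}-(Qz^{m_{i}})_{<0}$ and $\partial_{m_{i}}(Q)=[B_{m_{i}},Q]+D_{m_{i}}$ throughout the left-hand side of (\ref{ZCAKNS}) and using $[Qz^{m_{1}},Qz^{m_{2}}]=0$ repeatedly, I expect every term carrying an undifferentiated factor $Qz^{m_{i}}$ to cancel in pairs, leaving
\begin{multline*}
\partial_{m_{1}}(B_{m_{2}})-\partial_{m_{2}}(B_{m_{1}})-[B_{m_{1}},B_{m_{2}}]
=D_{m_{1}}z^{m_{2}}-D_{m_{2}}z^{m_{1}}\\
{}-\partial_{m_{1}}((Qz^{m_{2}})_{<0})+\partial_{m_{2}}((Qz^{m_{1}})_{<0})-[(Qz^{m_{1}})_{<0},(Qz^{m_{2}})_{<0}].
\end{multline*}
By hypothesis the left-hand side vanishes, and the last three terms on the right lie in ${\rm sl}_{2}(R)[z,z^{-1})_{<0}$; applying $(\cdot)_{\geqslant 0}$ to both sides therefore gives
\[
(D_{m_{1}}z^{m_{2}})_{\geqslant 0}=(D_{m_{2}}z^{m_{1}})_{\geqslant 0}\qquad\text{for all }m_{1},m_{2}\geqslant 0 .
\]

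To conclude, fix $m_{1}$ and write $D_{m_{1}}=\sum_{j\geqslant 1}d_{j}z^{-j}$ (legitimate by the first paragraph). For any $m_{2}>m_{1}$ the right-hand side of the last identity involves only powers $z^{p}$ with $p\leqslant m_{1}-1$, so comparing coefficients of $z^{p}$ with $p\geqslant m_{1}$ forces $d_{j}=0$ for $1\leqslant j\leqslant m_{2}-m_{1}$; letting $m_{2}\to\infty$ yields $D_{m_{1}}=0$, and since $m_{1}$ was arbitrary, $Q$ solves (\ref{LaxAKNS}). Part~(2) follows by the identical scheme applied to a deformation $Z$ as in (\ref{sdecoQ0}), with $D_{m}:=\partial_{m}(Z)-[C_{m},Z]$, $C_{m}=(Zz^{m-1})_{>0}$, and the splitting (\ref{SAKNSdeco}) in place of (\ref{AKNSdeco}); the one new feature is that the top coefficient $Z_{0}$ of $Z$ need not be constant, so now $D_{m}$ only has $z$-degree $\leqslant 1$, the remainder terms assembled from the $(Zz^{m_{i}-1})_{\leqslant 0}$ have $z$-degree $\leqslant 0$, and one compares components of degree $\geqslant 1$ before taking $m_{2}\to\infty$.

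The step I expect to cost the most care is the cancellation leading to the displayed identity: one has to expand the zero curvature expression so that all terms containing an undifferentiated $Qz^{m_{i}}$ (resp. $Zz^{m_{i}-1}$) drop out and the surviving remainder is manifestly supported in $z$-degrees $\leqslant -1$ (resp. $\leqslant 0$). Granting that, the uniform degree bound on $D_{m}$ and the final coefficient-matching with $m_{2}\to\infty$ are routine.
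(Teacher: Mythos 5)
Your proof is correct and rests on the same mechanism as the paper's: both let the second index tend to infinity and exploit the fact that the zero curvature relation forces the non-negative (resp.\ strictly positive) part of the Lax defect, after multiplication by a growing power of $z$, to stay bounded in degree. The only difference is presentational --- you argue directly, via the explicit cancellation identity and coefficient comparison, whereas the paper frames the same degree count as a contradiction with the unbounded growth of the leading term of $z^{N}\bigl(\partial_{m}(Z)-[C_{m},Z]\bigr)$.
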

 \begin{proof}
 Again we prove the statement for $Z$, that for $Q$ is shown in a similar way.
 So, assume that there is one Lax equation (\ref{LaxSAKNS}) that does not hold. Then there is a $m \geqslant 1$ such that 
$$
\partial_{m}(Z)-[C_{m},Z]=\sum_{j \leqslant k(m)} X_{j}z^{j}, \text{ with }X_{k(m)} \neq 0.
$$
Since both $\partial_{m}(Z)$ and $[C_{m},Z]$ are of order smaller than or equal to one in $z$, we know that $k(m) \leqslant 1$. Further, we can say that for all 
$N\geqslant 0 $
$$
\partial_{m}(z^{N}Z)-[C_{m},z^{N}Z]=\sum_{j \leqslant k(m)} X_{j}z^{j+N}, \text{ with }X_{k(m)} \neq 0
$$
and we see by letting $N$ go to infinity that the right hand side can obtain any sufficiently large order in $z$. By the zero curvature relation for $N$ and $m$ we get for the left hand side
\begin{align*}
\partial_{m}(z^{N}Z)-[C_{m},z^{N}Z]&=\partial_{m}(C_{N}) -[C_{m}, C_{N}]+\partial_{m}((z^{N}Z)_{\leqslant 0}) -[C_{m}, (z^{N}Z)_{\leqslant0}]\\
&=\partial_{N}(C_{m})+ \partial_{m}((z^{N}Z)_{ \leqslant 0}) -[C_{m}, (z^{N}Z)_{\leqslant 0}]
\end{align*}
and this last expression is of order smaller or equal to $m$ in $z$. This contradicts the unlimited growth in orders of $z$ of the right hand side. Hence all Lax equations (\ref{LaxSAKNS}) have to hold for $Z$.
\end{proof}
Because of the equivalence between the Lax equations (\ref{LaxAKNS}) for $Q$ and the zero curvature relations (\ref{ZCAKNS}) for the $\{ B_{m} \}$, we call this last set of equations also the {\it zero curvature form} of the AKNS-hierarchy. Similarly, the zero curvature relations (\ref{ZCSAKNS}) for the $\{ C_{m} \}$ is called the {\it zero curvature form} of the strict AKNS-hierarchy.

Besides the zero curvature relations for the cut-off's $\{ B_{m}\}$ resp. $\{ C_{m}\}$ corresponding to respectively a solution $Q$ of the AKNS-hierachy and a solution $Z$ of the strict AKNS-hierachy, also other parts satisfy such relations. Define
$$
A_{m}:=B_{m}-Qz^{m}, m \geqslant 0, \text{ and }D_{m}:=C_{m}-Zz^{m-1}, m\geqslant 1,
$$
Then we can say

\begin{corollary}The following relations hold:
\label{C1.1}
\begin{itemize}
\item The parts $\{ A_{m} \mid m \geqslant 0\}$ of a solution $Q$ of the AKNS-hierarchy satisfy 
$$
\partial_{m_{1}}(A_{m_{2}})-\partial_{m_{2}}(A_{m_{1}})-[A_{m_{1}},A_{m_{2}}]=0.
$$
\item
The parts $\{ D_{m} \mid m \geqslant 0\}$ of a solution $Z$ of the strict AKNS-hierarchy satisfy 
$$
\partial_{m_{1}}(D_{m_{2}})-\partial_{m_{2}}(D_{m_{1}})-[D_{m_{1}},D_{m_{2}}]=0
$$
\end{itemize}
\end{corollary}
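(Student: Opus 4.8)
The plan is to deduce the zero curvature identity for the $\{ A_{m} \}$, resp. the $\{ D_{m} \}$, from the one already proved in Proposition \ref{P2.1} for the $\{ B_{m} \}$, resp. the $\{ C_{m} \}$, using only the defining relations $A_{m}=B_{m}-Qz^{m}$ and $D_{m}=C_{m}-Zz^{m-1}$, the Lax equations, and the commutativity of the deformed generators $\{ Qz^{m} \}$, resp. $\{ Zz^{m} \}$. I will carry out the computation for $Z$ and the $\{ D_{m} \}$; the case of $Q$ and the $\{ A_{m} \}$ is obtained by the evident substitutions ($Z \mapsto Q$, $C_{m}\mapsto B_{m}$, $z^{m-1}\mapsto z^{m}$) and needs no separate treatment.

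First I would substitute $D_{m_{i}}=C_{m_{i}}-Zz^{m_{i}-1}$ into $\partial_{m_{1}}(D_{m_{2}})-\partial_{m_{2}}(D_{m_{1}})-[D_{m_{1}},D_{m_{2}}]$ and expand. The terms containing only the $\{ C_{m} \}$ reassemble into the left hand side of (\ref{ZCSAKNS}), which is zero by Proposition \ref{P2.1}; the term $[Zz^{m_{1}-1},Zz^{m_{2}-1}]$ vanishes because all the $\{ Zz^{k} \}$ commute. What is left is the collection of mixed terms
$$
-\partial_{m_{1}}(Zz^{m_{2}-1})+\partial_{m_{2}}(Zz^{m_{1}-1})+[C_{m_{1}},Zz^{m_{2}-1}]+[Zz^{m_{1}-1},C_{m_{2}}].
$$
Now I would use that $\partial_{m}$ acts coefficientwise and fixes the powers of $z$, together with the Lax equations (\ref{LaxSAKNS}), to rewrite $\partial_{m_{1}}(Zz^{m_{2}-1})=[C_{m_{1}},Z]z^{m_{2}-1}=[C_{m_{1}},Zz^{m_{2}-1}]$ and likewise $\partial_{m_{2}}(Zz^{m_{1}-1})=[C_{m_{2}},Zz^{m_{1}-1}]=-[Zz^{m_{1}-1},C_{m_{2}}]$. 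Substituting these, the four mixed terms cancel in pairs and the whole expression vanishes, which is the asserted identity.

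I do not expect a genuine obstacle: the computation is a few lines of bookkeeping with the $z$-shifts and signs. In fact the shortest route is simply to reread the end of the proof of Proposition \ref{P2.1}: there the left hand side of (\ref{ZCSAKNS}) was shown to equal $-\partial_{m_{1}}((z^{m_{2}-1}Z)_{\leqslant 0})+\partial_{m_{2}}((z^{m_{1}-1}Z)_{\leqslant 0})-[(z^{m_{1}-1}Z)_{\leqslant 0},(z^{m_{2}-1}Z)_{\leqslant 0}]$, and since $D_{m}=-(z^{m-1}Z)_{\leqslant 0}$ this is precisely $\partial_{m_{1}}(D_{m_{2}})-\partial_{m_{2}}(D_{m_{1}})-[D_{m_{1}},D_{m_{2}}]$; as the former was shown to be zero, so is the latter. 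The same remark applied to the parallel argument for $Q$ and the $\{ B_{m} \}$ gives the first bullet, so the corollary is essentially already contained in the proof of Proposition \ref{P2.1}.
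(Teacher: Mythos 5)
Your proposal is correct and follows essentially the same route as the paper: substitute the relation between $C_{m}$ and $D_{m}$ (equivalently $B_{m}$ and $A_{m}$) into the zero curvature relations of Proposition \ref{P2.1}, then cancel the mixed terms using the Lax equations for the $\{Zz^{m-1}\}$ and the commutativity of these elements. Your closing observation that $D_{m}=-(z^{m-1}Z)_{\leqslant 0}$ makes the corollary visible already in the final display of the proof of Proposition \ref{P2.1} is a nice shortcut, but it is the same computation in disguise.
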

\begin{proof}
Again we show the result only in the strict case. Note that the $\{Zz^{m-1}\}$ satisfy Lax equations similar to $Z$
$$
\partial_{i}(Zz^{m-1})=[D_{i}, Zz^{m-1}], i \geqslant 1.
$$
Now we substitute in the zero curvature relations for the $\{ C_{m}\}$ everywhere the relation $C_{m}=D_{m}+Zz^{m-1}$ and use the above Lax equations and the fact that all the $\{Zz^{m-1}\}$ commute. This gives the desired result.
\end{proof}

To clarify the link with the AKNS-equation, consider the relation (\ref{ZCAKNS}) for $m_{1}=2, m_{2}=1$ 
$$
\partial_{2}(Q_{0}z +Q_{1})=\partial_{1}(Q_{0}z^{2} +Q_{1}z +Q_{2})+[Q_{2}, Q_{0}z +Q_{1}].
$$
Then this identity reduces in ${\rm sl}_{2}(R)[z, z^{-1})_{ \geqslant 0}$, since $Q_{0}$ is constant, to the following two equalities
\begin{align}
\label{ZC12}
\partial_{1}(Q_{1})=[Q_{0}, Q_{2}] \text{ and }\partial_{2}(Q_{1}) =\partial_{1}(Q_{2})+[Q_{2},Q_{1}].
\end{align}
The first gives an expression of the off-diagonal terms of $Q_{2}$ in the coefficients $q$ and $r$ of $Q_{1}$, i.e.
$$
q_{12}=\frac{i}{2}\partial_{1}(q) \text{ and }q_{21}=-\frac{i}{2}\partial_{1}(r)
$$ 
and the second equation becomes the AKNS-equations (\ref{akns2}), if one has $\partial_{1}=\frac{\partial}{\partial x}$ and $\partial_{2}=\frac{\partial}{\partial t}$.

\section{The linearization of both hierarchies}
\label{linhier}

The zero curvature form of both hierarchies points at the possible existence of a linear system of which the zero curvature equations form the compatibility conditions. We present here such a system for each hierarchy. For the AKNS hierarchy, this system, the {\it linearization of the AKNS hierarchy}, is as follows: take a potential solution 
$Q$ of the form (\ref{decoQ0}) and consider the system
\begin{equation}
\label{LinAKNS}
Q \psi= \psi Q_{0}, \partial_{m}(\psi)=B_{m}\psi, \text{ for all $m \geqslant 0$ and $B_{m}=(Qz^{m})_{\geqslant 0}$}.
\end{equation}
Likewise, for a potential solution $Z$ of the strict AKNS hierarchy of the form (\ref{sdecoQ0}), the {\it linearization of the strict AKNS hierarchy} is given by
\begin{equation}
\label{LinSAKNS}
Z \varphi= \varphi Q_{0}z, \partial_{m}(\varphi)=C_{m}\varphi, \text{ for all $m \geqslant 1$ and $C_{m}=(Zz^{m-1})_{> 0}$}.
\end{equation}
Before specifying $\psi$ and $\varphi$, we show the manipulations needed to derive the Lax equations of the hierarchy from its linearization. We do this for the strict AKNS hierarchy, for the AKNS hierarchy one proceeds similarly. Apply $\partial_{m}$ to the first equation in (\ref{LinSAKNS}) and apply both equations (\ref{LinSAKNS}) in the sequel:
\begin{align}
\notag
\partial_{m}(Z\varphi -\varphi Q_{0}z)&=\partial_{m}(Z )\varphi +Z\partial_{m}(\varphi)-\partial_{m}(\varphi)Q_{0}z=0\\ \notag
&=\partial_{m}(Z )\varphi+ZC_{m} \varphi-C_{m} \varphi Q_{0}z\\ \label{LinLax}
&=\left\{ \partial_{m}(Z )-[C_{m},Z] \right\} \varphi=0.
\end{align}
Hence, if we can scratch $\varphi$ from the equation (\ref{LinLax}), then we obtain the Lax equations of the strict AKNS hierarchy. To get the proper set-up where all these manipulations make sense, we first have a look at the linearization for the trivial solutions $Q=Q_{0}$ and $Z=Q_{0}z$. In the AKNS-case we have then
\begin{equation}
\label{tsolAKNS}
Q_{0} \psi_{0}= \psi_{0} Q_{0}, \partial_{m}(\psi)=Q_{0}z^{m}\psi_{0}, \text{ for all $m \geqslant 0$}.
\end{equation}
and for its strict version
\begin{equation}
\label{tsolSAKNS}
Q_{0}z \varphi_{0}= \varphi_{0} Q_{0}z, \partial_{m}(\varphi_{0})=Q_{0}z^{m} \varphi_{o}, \text{ for all $m \geqslant 1$}.
\end{equation}
Assuming that each derivation $\partial_{m}$ equals $\frac{\partial}{\partial t_{m}}$, one arrives for (\ref{tsolAKNS}) at the solution 
$$
\psi_{0}=\psi_{0}(t,z)=\exp(\sum_{m=0}^{\infty}t_{m}Q_{0}z^{m}),
$$
$ \text{ where $t$ is the short hand notation for $\{t_{m} \mid m \geqslant 0\},$}$ and for (\ref{tsolSAKNS}) it leads to
$$
\varphi_{0}=\varphi_{0}(t,z)=\exp(\sum_{m=1}^{\infty}t_{m}Q_{0}z^{m}), \text{ with $t=\{t_{m} \mid m \geqslant 1\}.$}
$$
For general $\psi$ and $\varphi$, one needs in (\ref{LinAKNS}) resp. (\ref{LinSAKNS}) a left action with elements like $Q$, $B_{m}$ resp. $Z$, $C_{m}$ from ${\rm gl}_{2}(R)[z, z^{-1})$, an action of all the $\partial_{m}$ and a right action of $Q_{0}$ resp. $Q_{0}z$. This can all be realized in suitable ${\rm gl}_{2}(R)[z, z^{-1})$-modules that are perturbations of the trivial solutions $\psi_{0}$ and $\varphi_{0}$. For, consider in a AKNS-setting 
\begin{equation}
\label{M0}
\mathcal{M}_{0}=\left\{ \{g(z)\}\psi_{0}=\left\{ \sum_{i=-\infty}^{N}g_{i}z^{i} \right\}\psi_{0} \mid g(z)=\sum_{i=-\infty}^{N}g_{i}z^{i} \in {\rm gl}_{2}(R)[z, z^{-1}) \right\}
\end{equation}
and in a strict AKNS-setting
 \begin{equation}
\label{M1}
\mathcal{M}_{1}=\left\{ \{g(z)\} \varphi_{0}=\left\{ \sum_{i=-\infty}^{N}g_{i}z^{i} \right\}\varphi_{0} \mid g(z)=\sum_{i=-\infty}^{N}g_{i}z^{i} \in {\rm gl}_{2}(R)[z, z^{-1}) \right\},
\end{equation}
where the products $\{g(z)\}\psi_{0}$ and $\{g(z)\} \varphi_{0}$ should be seen as formal and both factors should be kept separate. On both $\mathcal{M}_{0}$ and $\mathcal{M}_{1}$ one can define the required actions: for each $h(z) \in {\rm gl}_{2}(R)[z, z^{-1})$ define
\[
h(z).\{g(z)\}\psi_{0}:=\{h(z)g(z)\}\psi_{0} \text{ resp. }h(z).\{g(z)\} \varphi_{0}:=\{h(z)g(z)\} \varphi_{0}.
\]
The right hand action of $Q_{0}$ resp. $Q_{0}z$ we define by
\[
\{g(z)\}\psi_{0}Q_{0}:=\{g(z)Q_{0}\}\psi_{0} \text{ resp. }\{g(z)\} \varphi_{0}Q_{0}z:=\{g(z)Q_{0}z\} \varphi_{0}
\]
and the action of each $\partial_{m}$ by
\begin{align*}
\partial_{m}(\{g(z)\}\psi_{0})=\left\{ \sum_{i=-\infty}^{N}\partial_{m}(g_{i})z^{i} +\left\{ \sum_{i=-\infty}^{N}g_{i}Q_{0}z^{i+m} \right\} \right\}\psi_{0}\\
\partial_{m}(\{g(z)\}\varphi_{0})=\left\{ \sum_{i=-\infty}^{N}\partial_{m}(g_{i})z^{i} +\left\{ \sum_{i=-\infty}^{N}g_{i}Q_{0}z^{i+m} \right\} \right\}\varphi_{0}
\end{align*}
Analogous to the terminology in the scalar case, see \cite{DJKM83}, we call the elements of $\mathcal{M}_{0}$ and $\mathcal{M}_{1}$ {\it oscillating matrices}.
Note that both $\mathcal{M}_{0}$ and $\mathcal{M}_{1}$ are free ${\rm gl}_{2}(R)[z, z^{-1})$-modules with generators $\psi_{0}$ resp. $\varphi_{0}$, because for 
each $h(z) \in {\rm gl}_{2}(R)[z, z^{-1})$ we have
\[
h(z).\psi_{0}=h(z).\{1\}\psi_{0}=\{h(z)\}\psi_{0} \text{ resp. }h(z).\varphi_{0}=h(z).\{1\}\psi_{0}=\{h(z)\}\varphi_{0}. 
\]
Hence, in order to be able to perform legally the scratching of the $\psi=\{h(z)\}\psi_{0}$ or $\varphi=\{h(z)\}\varphi_{0}$, it is enough to find oscillating matrices such that 
$h(z)$ is invertible in ${\rm gl}_{2}(R)[z, z^{-1})$. We will now introduce a collection of such elements that will occur at the construction of solutions of both hierarchies.

For $m=(m_{1},m_{2}) \in \mathbb{Z}^{2}$, let $\delta(m) \in {\rm gl}_{2}(R)[z, z^{-1})$ be given by
$$
\delta(m)=\left(
\begin{matrix}
z^{m_{1}}&0 \\
0&z^{m_{2}}
\end{matrix}
\right).
$$ 
Then $\delta(m)$ has $\delta(-m)$ as its inverse in ${\rm gl}_{2}(R)[z, z^{-1})$ and the collection $\Delta=\{\delta(m) \mid m \in \mathbb{Z}^{2}\}$ forms a group. An element $\psi \in \mathcal{M}_{0}$ is called an {\it oscillating matrix of type} $\delta(m)$ if it has the form 
$$
\psi=\{ h(z) \delta(m) \}\psi_{0}, \text{ with } h(z) \in G_{<0}.
$$
These oscillating matrices are examples of elements of $\mathcal{M}_{0}$ for which the scratching procedure is valid. Let $Q$ be a potential solution of the AKNS hierarchy
of the form (\ref{decoQ0}). If there is an oscillating function $\psi$ of type $\delta(m)$ such that for $Q$ and $\psi$ the equations (\ref{LinAKNS}) hold, then we call $\psi$ a {\it wave matrix of the AKNS hierarchy of type} $\delta(m)$. In particular, $Q$ is then a solution of the AKNS hierarchy and the first equation in (\ref{LinAKNS}), implies
$$
Q h(z) \delta(m)=h(z)Q_{0}\delta(m) \Rightarrow Q=h(z)Q_{0}h(z)^{-1},
$$
in other words the solution $Q$ is totally determined by the wave matrix. Similarly, we call an element $\varphi \in \mathcal{M}_{1}$ an {\it oscillating matrix of type} $\delta(m)$ if it has the form 
$$
\varphi=\{ h(z) \delta(m) \}\varphi_{0}, \text{ with } h(z) \in G_{\leqslant 0}.
$$
This type of oscillating matrices are examples of elements of $\mathcal{M}_{1}$ for which the scratching procedure is valid. Let $Z$ be a potential solution of the strict AKNS hierarchy
of the form (\ref{sdecoQ0}). If there is an oscillating function $\varphi$ of type $\delta(m)$ in $\mathcal{M}_{1}$ such that for $Z$ and $\varphi$ the equations (\ref{LinSAKNS}) hold, then we call $\varphi$ a {\it wave matrix of the strict AKNS hierarchy of type} $\delta(m)$. In particular, $Z$ is then a solution of the strict AKNS hierarchy and the first equation in (\ref{LinSAKNS}), implies
$$
Z h(z) \delta(m)=h(z)Q_{0}z\delta(m) \Rightarrow Z=h(z)Q_{0}zh(z)^{-1},
$$
so that also here the wave matrix fully determines the solution. 

For both hierarchies, there is a milder condition that oscillating matrices of a certain type have to satisfy, in order to become a wave function of that hierarchy.
\begin{proposition}
\label{P2.1}
Let $\psi=\{ h(z) \delta(m) \}\psi_{0}$ be an oscillating matrix of type $\delta(m)$ in $\mathcal{M}_{0}$ and $Q=h(z)Q_{0}h(z)^{-1}$ the corresponding potential solution of the AKNS hierarchy. Similarly, let $\varphi=\{ h(z) \delta(m) \}\varphi_{0}$ be such a matrix in $\mathcal{M}_{1}$ with potential solution $Z=h(z)Q_{0}zh(z)^{-1}$.
\begin{enumerate}
\item[(a)] If there exists for each $m \geqslant 0$ an element $M_{m} \in {\rm gl}_{2}(R)[z, z^{-1})_{\geqslant 0}$ such that 
\[
\partial_{m}(\psi)=M_{m}\psi.
\]
Then each $M_{m}=(Qz^{m})_{\geqslant 0}$ and $\psi$ is a wave matrix of type $\delta(m)$ for the AKNS hierarchy.
\item[(b)] 
If there exists for each $m \geqslant 1$ an element $N_{m} \in {\rm gl}_{2}(R)[z, z^{-1})_{> 0}$ such that 
\[
\partial_{m}(\varphi)=N_{m}\varphi.
\]
Then each $N_{m}=(Zz^{m-1})_{>0}$ and $\varphi$ is a wave matrix of type $\delta(m)$ for the strict AKNS hierarchy.
\end{enumerate}
\end{proposition}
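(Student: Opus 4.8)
The plan is to establish part (b) in full; part (a) is proved by exactly the same argument, using the splitting (\ref{AKNSdeco}) and the group $G_{<0}$ in place of (\ref{SAKNSdeco}) and $G_{\leqslant 0}$, and replacing $Zz^{m-1}$ throughout by $Qz^{m}$. The first step is to convert the hypothesis $\partial_{m}(\varphi)=N_{m}\varphi$ into an identity involving $h(z)$ alone. Since $\delta(m)$ has constant (scalar) entries, commutes with $Q_{0}$, and $z$ is central, the defining formula for the action of $\partial_{m}$ on $\mathcal{M}_{1}$ gives
\[
\partial_{m}(\varphi)=\bigl\{\bigl(\partial_{m}(h(z))+h(z)Q_{0}z^{m}\bigr)\delta(m)\bigr\}\varphi_{0}.
\]
Comparing with $N_{m}\varphi=\{N_{m}h(z)\delta(m)\}\varphi_{0}$ and using that $\mathcal{M}_{1}$ is a free ${\rm gl}_{2}(R)[z, z^{-1})$-module on $\varphi_{0}$, I may cancel the invertible factor $\delta(m)$ on the right and then multiply on the right by $h(z)^{-1}$; recalling $Z=h(z)Q_{0}z\,h(z)^{-1}$ this yields
\[
\partial_{m}(h(z))h(z)^{-1}=N_{m}-Zz^{m-1}.
\]

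The key step is the observation that the left-hand side lies in ${\rm gl}_{2}(R)[z, z^{-1})_{\leqslant 0}$: indeed $h(z)\in G_{\leqslant 0}$ forces $h(z)^{-1}\in G_{\leqslant 0}\subseteq{\rm gl}_{2}(R)[z, z^{-1})_{\leqslant 0}$, the derivation $\partial_{m}$ acts coefficientwise and hence preserves the property of having only non-positive powers of $z$, and ${\rm gl}_{2}(R)[z, z^{-1})_{\leqslant 0}$ is closed under multiplication. Rewriting the displayed identity as
\[
N_{m}=\Bigl(\partial_{m}(h(z))h(z)^{-1}+(Zz^{m-1})_{\leqslant 0}\Bigr)+(Zz^{m-1})_{>0},
\]
the first bracket lies in ${\rm gl}_{2}(R)[z, z^{-1})_{\leqslant 0}$ and the last term in ${\rm gl}_{2}(R)[z, z^{-1})_{>0}$, while $N_{m}\in{\rm gl}_{2}(R)[z, z^{-1})_{>0}$ by hypothesis. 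Since the sum in (\ref{SAKNSdeco}) is direct, the ${\rm gl}_{2}(R)[z, z^{-1})_{\leqslant 0}$-component must vanish, so $N_{m}=(Zz^{m-1})_{>0}=C_{m}$ (and, as a by-product, $\partial_{m}(h(z))h(z)^{-1}=-(Zz^{m-1})_{\leqslant 0}$, which is the Sato--Wilson form of the equations).

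Finally I would check that $\varphi$ satisfies both equations of the linearization (\ref{LinSAKNS}). The relation $\partial_{m}(\varphi)=C_{m}\varphi$ is now just the hypothesis with $N_{m}$ identified. The equation $Z\varphi=\varphi Q_{0}z$ follows at once from $Z=h(z)Q_{0}z\,h(z)^{-1}$ and the fact that $\delta(m)$ commutes with $Q_{0}z$, using only the module structure of $\mathcal{M}_{1}$; consequently $Z$ is a solution of the strict AKNS hierarchy and $\varphi$ is a wave matrix of type $\delta(m)$. I do not expect a genuine obstacle here: the only points that need care are the bookkeeping with the constant factor $\delta(m)$ when differentiating $\varphi$, and the verification that $\partial_{m}(h(z))h(z)^{-1}$ carries no positive powers of $z$, after which the directness of the decomposition does all the work.
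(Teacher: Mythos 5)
Your proposal is correct and follows essentially the same route as the paper: translate $\partial_{m}(\varphi)=N_{m}\varphi$ into the identity $\partial_{m}(h(z))h(z)^{-1}+Zz^{m-1}=N_{m}$ using freeness of $\mathcal{M}_{1}$ over $\varphi_{0}$, and then use that $\partial_{m}(h(z))h(z)^{-1}\in{\rm gl}_{2}(R)[z,z^{-1})_{\leqslant 0}$ together with the directness of the splitting to conclude $N_{m}=(Zz^{m-1})_{>0}$. Your extra care with the $\delta(m)$ factor and the explicit check of $Z\varphi=\varphi Q_{0}z$ are details the paper leaves implicit, but the argument is the same.
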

\begin{proof} We give the proof again for the strict AKNS-case, the other one is similar. By using the fact that $\mathcal{M}_{1}$ is a free ${\rm gl}_{2}(R)[z, z^{-1})$-module 
with generator $\varphi_{0}$ we can translate the relations $\partial_{m}(\varphi)=N_{m}\varphi$ into equations in ${\rm gl}_{2}(R)[z, z^{-1})$. This yields:
$$
\partial_{m}(h(z))+h(z)Q_{0}z^{m}=N_{m}h(z) \Rightarrow \partial_{m}(h(z))h(z)^{-1} +Zz^{m-1}=N_{m}
$$
Projecting the right hand side on ${\rm gl}_{2}(R)[z, z^{-1})_{>0}$ gives us the identity $$(Zz^{m-1})_{>0}=N_{m}$$ we are looking for.
\end{proof}
In the next section we produce a context where we can construct wave matrices for both hierarchies in which the product is real.

\section{The construction of the solutions}
\label{constr}

In this section we will show how to construct a wide class of solutions of both hierarchies. This is done in the style of \cite{SW}. We first describe the group of loops we will work with. For each $0 < r<1$, let $A_{r}$ be the annulus $$\{ z\mid z \in \mathbb{C}, r \leqslant |z| \leqslant \dfrac{1}{r} \}.$$  Following \cite{PS}, we use the notation $L_{an}\GL_{2}(\mathbb{C})$ for the collection of holomorphic maps from some annulus $A_{r}$ into $\GL_{2}(\mathbb{C})$. It is a group w.r.t. point wise multiplication and contains in a natural way $\GL_{2}(\mathbb{C})$ as a subgroup as the collection of constant maps into $\GL_{2}(\mathbb{C})$. Other examples of elements in $L_{an}\GL_{2}(\mathbb{C})$  are the elements of $\Delta$. However, $L_{an}\GL_{2}(\mathbb{C})$ is more than just a group, it is an infinite dimensional Lie group.
Its manifold structure can be read off from its Lie algebra $L_{an}\gl_{2}(\mathbb{C})$ consisting of all holomorphic maps $\gamma :U \to \gl_{2}(\mathbb{C})$, where $U$ is an open neighborhood of some annulus $A_{r}, 0 < r<1.$ Since $\gl_{2}(\mathbb{C})$ is a Lie algebra, the space $L_{an}\gl_{2}(\mathbb{C})$ becomes a Lie algebra w.r.t. the point wise commutator. Topologically, the space $L_{an}\gl_{2}(\mathbb{C})$ is the direct limit of all the spaces $L_{an,r}\gl_{2}(\mathbb{C})$, where this last space consists of all $\gamma$ corresponding to the fixed annulus $A_{r}$. One gives each $L_{an,r}\gl_{2}(\mathbb{C})$ the topology of uniform convergence and with that topology it becomes a Banach space. In this way, $L_{an}\gl_{2}(\mathbb{C})$ becomes a Fr\'echet space. The point wise exponential map defines a local diffeomorphism around zero in $L_{an}\gl_{2}(\mathbb{C})$, see e.g. \cite{Hamilton82}.

Now each $\gamma \in L_{an}\gl_{2}(\mathbb{C})$ possesses an expansion in a Fourier series
\begin{equation}
\label{FSgamma}
\gamma= \sum_{k=-\infty}^{\infty}\gamma_{k}z^{k}, \text{ with each }\gamma_{k} \in \gl_{2}(\mathbb{C})
\end{equation}
that converges absolutely on the annulus it is defined:
\[
\sum_{k=-\infty}^{\infty} ||\gamma_{k}||r^{-|k|} < \infty.
\]
This Fourier expansion is used to make two different decompositions of the Lie algebra $L_{an}\gl_{2}(\mathbb{C})$. The first is the analogue for the present Lie algebra $L_{an}\gl_{2}(\mathbb{C})$ of the decomposition (\ref{AKNSdeco}) of ${\rm gl}_{2}(R)[z, z^{-1})$ that lies at the basis of the Lax equations of the AKNS hierarchy. Namely, consider the subspaces 
\begin{align*}
L_{an}\gl_{2}(\mathbb{C})_{\geqslant 0}:=\{ \gamma \mid \gamma \in L_{an}\gl_{2}(\mathbb{C}),\gamma =\sum_{k=0}^{\infty}\gamma_{k}z^{k}\}\\
L_{an}\gl_{2}(\mathbb{C})_{< 0}:=\{ \gamma \mid \gamma \in L_{an}\gl_{2}(\mathbb{C}),\gamma =\sum_{k=-\infty}^{-1}\gamma_{k}z^{k}\}
\end{align*}
Both are Lie subalgebras of $L_{an}\gl_{2}(\mathbb{C})$ and their direct sum equals the whole Lie algebra. The first Lie algebra consists of the elements in $L_{an}\gl_{2}(\mathbb{C})$ that extend to holomorphic maps defined on some disk $\{ z \in \mathbb{C} \mid |z| \leqslant \dfrac{1}{r} \},0< r <1,$ and the second Lie algebra corresponds to the maps in $L_{an}\gl_{2}(\mathbb{C})$ that have a holomorphic extension towards some disk around infinity $\{z \in \mathbb{P}^{1}(\mathbb{C} ) \mid |z| \geqslant r \},0< r <1,$ and that are zero at infinity.
To each of the two Lie subalgebras belongs a subgroup of $L_{an}\GL_{2}(\mathbb{C})$. The point wise exponential map applied to elements of $L_{an}\gl_{2}(\mathbb{C})_{< 0}$ yields elements of 
\[
U_{-}=\{ \gamma \mid \gamma \in L_{an}\gl_{2}(\mathbb{C}),\gamma =\Id+\sum_{k=-\infty}^{-1}\gamma_{k}z^{k}\}
\]
and the exponential map applied to elements of $L_{an}\gl_{2}(\mathbb{C})_{\geqslant 0}$ maps them into
\[
P_{+}=\{ \gamma \mid \gamma \in L_{an}\gl_{2}(\mathbb{C}),\gamma =\gamma_{0}+\sum_{k=1}^{\infty}\gamma_{k}z^{k}, \text{ with }\gamma_{0} \in \GL_{2}(\mathbb{C})\}.
\]
As well $U_{-}$ as $P_{+}$ are easily seen to be subgroups of $L_{an}\GL_{2}(\mathbb{C})$ and since the direct sum of their Lie algebras is $L_{an}\gl_{2}(\mathbb{C})$, their product 
\begin{equation}
\label{U-P+}
\Omega=U_{-}P_{+}
\end{equation}
is open in $L_{an}\GL_{2}(\mathbb{C})$ and is called, like in the finite dimensional case, the {\it big cell} w.r.t. $U_{-}$ and $P_{+}$.

The second decomposition is the analogue of the splitting (\ref{SAKNSdeco}) of ${\rm gl}_{2}(R)[z, z^{-1})$ that led to the Lax equations of the strict AKNS hierarchy.
Now we consider the subspaces 
\begin{align*}
L_{an}\gl_{2}(\mathbb{C})_{> 0}:=\{ \gamma \mid \gamma \in L_{an}\gl_{2}(\mathbb{C}),\gamma =\sum_{k=1}^{\infty}\gamma_{k}z^{k}\}\\
L_{an}\gl_{2}(\mathbb{C})_{\leqslant 0}:=\{ \gamma \mid \gamma \in L_{an}\gl_{2}(\mathbb{C}),\gamma =\sum_{k=-\infty}^{0}\gamma_{k}z^{k}\}
\end{align*}
Both are Lie subalgebras of $L_{an}\gl_{2}(\mathbb{C})$ and their direct sum equals the whole Lie algebra. $L_{an}\gl_{2}(\mathbb{C})_{> 0}$ consists of maps whose holomorphic extension to $z=0$ equals zero in that point and $L_{an}\gl_{2}(\mathbb{C})_{\leqslant 0}$ is the set of maps that extend homomorphically to a neighborhood of infinity. To each of them belongs a subgroup of $L_{an}\GL_{2}(\mathbb{C})$. The point wise exponential map applied to elements of $L_{an}\gl_{2}(\mathbb{C})_{> 0}$ yields elements of 
\[
U_{+}=\{ \gamma \mid \gamma \in L_{an}\gl_{2}(\mathbb{C}),\gamma =\Id+\sum_{k=1}^{\infty}\gamma_{k}z^{k}\}
\]
and the exponential map applied to elements of $L_{an}\gl_{2}(\mathbb{C})_{\leqslant 0}$ maps them into
\[
P_{-}=\{ \gamma \mid \gamma \in L_{an}\gl_{2}(\mathbb{C}),\gamma =\gamma_{0}+\sum_{k=-\infty}^{-1}\gamma_{k}z^{k}, \text{ with }\gamma_{0} \in \GL_{2}(\mathbb{C})\}.
\]
As well $U_{-}$ as $P_{+}$ are easily seen to be subgroups of $L_{an}\GL_{2}(\mathbb{C})$. Since the direct sum of their Lie algebras is $L_{an}\gl_{2}(\mathbb{C})$, their product $P_{-}U_{+}$ is open and because $P_{+}=\GL_{2}(\mathbb{C})U_{+}=U_{+}\GL_{2}(\mathbb{C})$ and $P_{-}=\GL_{2}(\mathbb{C})U_{-}=U_{-}\GL_{2}(\mathbb{C})$, this gives the equality
\begin{equation}
\label{P-U+}
P_{-}U_{+}=\Omega=U_{-}P_{+},
\end{equation}
the big cell in $L_{an}\GL_{2}(\mathbb{C})$.

The next two subgroups of $L_{an}\GL_{2}(\mathbb{C})$ correspond to the exponential factors in both linearizations. The group of commuting flows relevant for the AKNS hierarchy is the group
\[
\Gamma_{0}=\{ \gamma_{0}(t)=\exp(\sum_{i=0}^{\infty} t_{i}Q_{0}z^{i}) \mid \gamma_{0} \in P_{+} \}
\]
and similarly, for the strict AKNS hierarchy, we use the commuting flows from
\[
\Gamma_{1}=\{ \gamma_{1}(t)=\exp(\sum_{i=1}^{\infty} t_{i}Q_{0}z^{i}) \mid \gamma_{1} \in U_{+} \}.
\]
Besides the groups $\Delta$, $\Gamma_{0}$ and $\Gamma_{1}$, there are more subgroups in $L_{an}\GL_{2}(\mathbb{C})$ that commute with those three groups and they are in a sense complimentary to $\Delta$, $\Gamma_{0}$ resp. $\Delta$, $\Gamma_{1}$. That is why we use the following notations for them: in $U_{-}$ there is 
$$
\Gamma_{0}^{c}=\{ \gamma^{c}_{0}(t)=\exp(\sum_{k=1}^{\infty} s_{k}Q_{0}z^{-k}) \mid \gamma^{c}_{0} \in U_{-} \}
$$
and in $P_{-}$ we have 
$$
\Gamma^{c}_{1}=\{ \gamma^{c}_{1}(t)=\exp(\sum_{k=0}^{\infty} s_{k}Q_{0}z^{-k}) \mid \gamma^{c}_{1} \in P_{-} \}.
$$

We have now all ingredients to describe the construction of the solutions to each hierarchy and we start with the AKNS hierarchy. Take inside the product $L_{an}\GL_{2}(\mathbb{C}) \times \Delta$ the collection $S_{0}$ of pairs $(g,\delta(m))$ such that there exists a $\gamma_{0}(t) \in \Gamma_{0}$ satisfying 
\begin{equation}
\label{CC0}
\delta(m) \gamma_{0}(t) g \gamma_{0}(t)^{-1}\delta(-m) \in \Omega=U_{-}P_{+}
\end{equation}
For such a pair $(g,\delta(m))$ we take the collection $\Gamma_{0}(g,\delta(m))$ of all $\gamma_{0}(t)$ satisfying the condition (\ref{CC0}). This is an open non-empty subset of $\Gamma_{0}$. Let $R_{0}(g,\delta(m))$ be the algebra of analytic functions $\Gamma_{0}(g,\delta(m)) \to \mathbb{C}$. This is the algebra of functions $R$ that we associate with the point $(g,\delta(m)) \in S_{0}$ and for the commuting derivations of $R_{0}(g,\delta(m))$ we choose $\partial_{i}:=\dfrac{\partial}{\partial t_{i}}, i \geqslant 0$. By property (\ref{CC0}), we have for all $\gamma_{0}(t) \in \Gamma_{0}(g,\delta(m))$
\begin{equation}
\label{decoU-P+}
\delta(m) \gamma_{0}(t) g \gamma_{0}(t)^{-1}\delta(-m)=u_{-}(g,\delta(m))^{-1}p_{+}(g,\delta(m)), 
\end{equation}
with $u_{-}(g,\delta(m)) \in U_{-}, p_{+}(g,\delta(m)) \in P_{+}.$ Then all the matrix coefficients in the Fourier expansions of the elements $u_{-}(g,\delta(m))$ and $p_{+}(g,\delta(m))$ belong to the algebra $R_{0}(g,\delta(m))$. It is convenient to write the relation (\ref{decoU-P+}) in the form
\begin{align}
\label{wmAKNS}
\Psi_{g,\delta(m)}:&=u_{-}(g,\delta(m))\delta(m) \gamma_{0}(t)=p_{+}(g,\delta(m))\delta(m)\gamma_{0}(t)g^{-1}\\ \notag
&=p_{+}(g,\delta(m))\gamma_{0}(t)\delta(m)g^{-1}=q_{+}(g,\delta(m))\delta(m)g^{-1}, 
\end{align}
with $q_{+}(g,\delta(m)) \in P_{+}.$ Clearly, $\Psi_{g,\delta(m)}$ is an oscillating matrix of type $\delta(m)$ in $\mathcal{M}_{0}$, where all the products in the decomposition $u_{-}(g,\delta(m))\delta(m) \gamma_{0}(t)$ are no longer formal but real. This is our potential wave matrix for the AKNS hierarchy.

To get the potential wave function for the strict AKNS hierarchy, we proceed similarly, only this time we use the decomposition $\Omega=P_{-}U_{+}$.
Consider again the product $L_{an}\GL_{2}(\mathbb{C}) \times \Delta$ and the subset $S_{1}$ of pairs $(g,\delta(m))$ such that there exists a $\gamma_{1}(t) \in \Gamma_{1}$ satisfying 
\begin{equation}
\label{CC1}
\delta(m) \gamma_{1}(t) g \gamma_{1}(t)^{-1}\delta(-m) \in \Omega=P_{-}U_{+}.
\end{equation}
For such a pair $(g,\delta(m))$ we define $\Gamma_{1}(g,\delta(m))$ as the set of all $\gamma_{1}(t)$ satisfying the condition (\ref{CC0}). This is an open non-empty subset of $\Gamma_{1}$. Let $R_{1}(g,\delta(m))$ be the algebra of analytic functions $\Gamma_{1}(g,\delta(m)) \to \mathbb{C}$. This is the algebra of functions $R$ that we associate with the point $(g,\delta(m)) \in S_{1}$ and for the commuting derivations of $R_{1}(g,\delta(m))$ we choose $\partial_{i}:=\dfrac{\partial}{\partial t_{i}}, i >0$. By property (\ref{CC1}), we have for all $\gamma_{1}(t) \in \Gamma_{1}(g,\delta(m))$
\begin{equation}
\label{decoP-U+}
\delta(m) \gamma_{1}(t) g \gamma_{1}(t)^{-1}\delta(-m)=p_{-}(g,\delta(m))^{-1}u_{+}(g,\delta(m)), 
\end{equation}
with $p_{-}(g,\delta(m)) \in P_{-}, u_{+}(g,\delta(m)) \in U_{+}.$ Then all the matrix coefficients in the Fourier expansions of the elements $p_{-}(g,\delta(m))$ and $u_{+}(g,\delta(m))$ belong to the algebra $R_{1}(g,\delta(m))$. It is convenient to write the relation (\ref{decoP-U+}) in the form
\begin{align}
\label{wmSAKNS}
\Phi_{g,\delta(m)}:&=p_{-}(g,\delta(m))\delta(m) \gamma_{1}(t)=u_{+}(g,\delta(m))\delta(m)\gamma_{1}(t)g^{-1}\\ \notag
&=u_{+}(g,\delta(m))\gamma_{1}(t)\delta(m)g^{-1}=w_{+}(g,\delta(m))\delta(m)g^{-1}, 
\end{align}
with $w_{+}(g,\delta(m)) \in U_{+}$. Clearly, $\Phi_{g,\delta(m)}$ is an oscillating matrix of type $\delta(m)$ in $\mathcal{M}_{1}$ where all the products in the decomposition $p_{-}(g,\delta(m))\delta(m) \gamma_{1}(t)$ are no longer formal but real.

To show that $\Psi_{g,\delta(m)}$ is a wave matrix for the AKNS hierarchy of type $\delta(m)$ and $\Phi_{g,\delta(m)}$ one for the strict AKNS hierarchy, we use the fact that it suffices to prove the relations in Proposition \ref{P2.1}. We treat first $\Psi_{g,\delta(m)}$. In equation (\ref{wmAKNS}) we presented two different decompositions of $\Psi_{g,\delta(m)}$ that we will use to compute $\partial_{i}(\Psi_{g,\delta(m)})$. W.r.t. $u_{-}(g,\delta(m))\delta(m) \gamma_{0}(t)$ there holds 
\[ 
\partial_{i}(\Psi_{g,\delta(m)})=\{ \partial_{i}(u_{-}(g,\delta(m)))u_{-}(g,\delta(m))^{-1}+z^{i}\}u_{-}(g,\delta(m))\delta(m) \gamma_{0}(t)=M_{i}\Psi_{g,\delta(m)},
\]
with $M_{i} \in {\rm gl}_{2}(R(g,\delta(m)))[z, z^{-1})$ of order $i$ in $z$. If we take the decomposition $q_{+}(g,\delta(m))\delta(m)g^{-1}$, then we get 
\[
\partial_{i}(\Psi_{g,\delta(m)})=\{ \partial_{i}(q_{+}(g,\delta(m)))q_{+}(g,\delta(m))^{-1}\}\Psi_{g,\delta(m)}
\]
with $\partial_{i}(q_{+}(g,\delta(m)))q_{+}(g,\delta(m))^{-1}$ of  the form $\sum_{j \geqslant 0}q_{j}z^{j}$, with all $q_{j} \in \gl_{2}(R(g,\delta(m)))$. As $\Psi_{g,\delta(m)}$ is invertible, we obtain $M_{i}=\sum_{j \geqslant 0}q_{j}z^{j}$ and thus $M_{i}=\sum_{j = 0}^{i}q_{j}z^{j}$ and that is the desired result.

We handle the case of $\Phi_{g,\delta(m)}$ in a similar way. Also in equation (\ref{wmAKNS}) you can find two different decompositions of $\Phi_{g,\delta(m)}$ that we will use to compute $\partial_{i}(\Phi_{g,\delta(m)})$. W.r.t. the expression $p_{-}(g,\delta(m))\delta(m) \gamma_{1}(t)$ we get for all $i \geqslant 1$
\[ 
\partial_{i}(\Phi_{g,\delta(m)})=\{ \partial_{i}(p_{-}(g,\delta(m)))p_{-}(g,\delta(m))^{-1}+z^{i}\}p_{-}(g,\delta(m))\delta(m) \gamma_{1}(t)=N_{i}\Phi_{g,\delta(m)},
\]
again with $N_{i} \in {\rm gl}_{2}(R_{1}(g,\delta(m)))[z, z^{-1})$ of order $i$ in $z$. Next we take the decomposition $w_{+}(g,\delta(m))\delta(m)g^{-1}$ and that yields
\[
\partial_{i}(\Phi_{g,\delta(m)})=\{ \partial_{i}(w_{+}(g,\delta(m)))w_{+}(g,\delta(m))^{-1}\}\Psi_{g,\delta(m)}.
\]
Since the zero-th order term of $w_{+}(g,\delta(m))$ is $\Id$, we note that the element\\ $\partial_{i}(w_{+}(g,\delta(m)))w_{+}(g,\delta(m))^{-1}$ has the form $\sum_{j \geqslant 1}w_{j}z^{j}$, with all $w_{j} \in \gl_{2}(R_{1}(g,\delta(m)))$. As $\Phi_{g,\delta(m)}$ is also invertible, we obtain $N_{i}=\sum_{j \geqslant 1}w_{j}z^{j}$ and thus $N_{i}=\sum_{j = 1}^{i}w_{j}z^{j}$ and that is the result we were looking for in the strict case.

The above constructions are not affected seriously by a number of transformations. E.g. if we write $\delta=\delta((1,1))$, then all the $\delta^{k}, k \in \mathbb{Z},$ are central elements and we have for all $(g,\delta(m)) \in S_{0} \text{ resp. } S_{1}$ and all $k \in \mathbb{Z}$ that 
\[
\Psi_{g,\delta(m)\delta^{k}}=\Psi_{g,\delta(m)}\delta^{k} \text{ resp. }\Phi_{g,\delta(m)\delta^{k}}=\Phi_{g,\delta(m)}\delta^{k}.
\]
Also for any $p_{1} \in P_{+}$, each $\gamma^{c}_{0}(s) \in \Gamma^{c}_{0}$ and all $(g,\delta(m)) \in S_{0}$, we see that the element $(\gamma^{c}_{0}(s)g\delta(-m)p_{1}\delta(m),\delta(m))$ also belongs to $S_{0}$, for 
\[
\delta(m)\gamma_{0}(t)\gamma^{c}_{0}(s)g\delta(-m)p_{1}\delta(m)\gamma_{0}(t)^{-1}\delta(-m)=\gamma^{c}_{0}(s)u_{-}^{-1}p_{+}\gamma_{0}(t)p_{1}\gamma_{0}(t)^{-1}.
\]
In particular, we see that 
$$
\Psi_{\gamma^{c}_{0}(s)g\delta(-m)p_{1}\delta(m),\delta(m)}=u_{-}\gamma^{c}_{0}(s)^{-1}\delta(m)\gamma_{0}(t).
$$ 
Its analogue in the strict case is: for any $u_{1} \in U_{+}$, each $\gamma^{c}_{1}(s) \in \Gamma^{c}_{1}$ and all $(g,\delta(m)) \in S_{1}$ the element $(\gamma^{c}_{1}(s)g\delta(-m)u_{1}\delta(m),\delta(m))$ also belongs to $S_{1}$ and there holds:
\[
\Phi_{\gamma^{c}_{1}(s)g\delta(-m)u_{1}\delta(m),\delta(m)}=p_{-}\gamma^{c}_{1}(s)^{-1}\delta(m)\gamma_{1}(t)
\]
We resume the foregoing results in a

\begin{theorem}
\label{T4.1}
Consider the product space $\Pi:=L_{an}\GL_{2}(\mathbb{C}) \times \Delta$. Then there holds:
\begin{enumerate}
\item[$($a$)$] For each point $(g,\delta(m)) \in \Pi$ that satisfies the condition (\ref{CC0}), we have a wave matrix of type $\delta(m)$ of the AKNS hierarchy $\Psi_{g,\delta(m)}$,
defined by (\ref{wmAKNS}), and this determines a solution $Q_{g,\delta(m)}=u_{-}(g,\delta(m))Q_{0}u_{-}(g,\delta(m))^{-1}$ of the AKNS hierarchy. For each $p_{1}\in P_{+}$, every $\gamma^{c}_{0}(s) \in \Gamma^{c}_{0}$
and all powers of $\delta$ one has:
$$
Q_{\gamma^{c}_{0}(s)g\delta(-m)p_{1}\delta(m), \delta^{k}\delta(m)}=Q_{g,\delta(m)}.
$$
\item[$($b$)$]
For each point $(g,\delta(m)) \in \Pi$ that satisfies the condition (\ref{CC1}), we have a wave matrix of type $\delta(m)$ of the strict AKNS hierarchy $\Phi_{g,\delta(m)}$,
defined by (\ref{wmSAKNS}), and it determines a solution $Z_{g,\delta(m)}=p_{-}(g,\delta(m))Q_{0}z p_{-}(g,\delta(m))^{-1}$ of the strict AKNS hierarchy. For each $u_{1}\in U_{+}$, every $\gamma^{c}_{1}(s) \in \Gamma^{c}_{1}$
and all powers of $\delta$ one has:
$$
Z_{\gamma^{c}_{1}(s)g\delta(-m)u_{1}\delta(m), \delta^{k}\delta(m)}=Z_{g,\delta(m)}.
$$
\end{enumerate}
\end{theorem}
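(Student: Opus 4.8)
The plan is to obtain Theorem \ref{T4.1} by assembling the facts established in Section \ref{constr} with Proposition \ref{P2.1}. First I would identify the objects in the statement as oscillating matrices of the right type. Since $\gamma_{0}(t)=\exp(\sum_{i\geqslant 0}t_{i}Q_{0}z^{i})$ coincides with the trivial oscillating matrix $\psi_{0}$ and $\gamma_{1}(t)=\exp(\sum_{i\geqslant 1}t_{i}Q_{0}z^{i})$ with $\varphi_{0}$, the first line of (\ref{wmAKNS}) exhibits $\Psi_{g,\delta(m)}=\{u_{-}(g,\delta(m))\delta(m)\}\psi_{0}$ as an oscillating matrix of type $\delta(m)$ in $\mathcal{M}_{0}$ with $h(z)=u_{-}(g,\delta(m))\in U_{-}\subseteq G_{<0}$; its potential solution is therefore $Q_{g,\delta(m)}=h(z)Q_{0}h(z)^{-1}=u_{-}(g,\delta(m))Q_{0}u_{-}(g,\delta(m))^{-1}$. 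Likewise (\ref{wmSAKNS}) gives $\Phi_{g,\delta(m)}=\{p_{-}(g,\delta(m))\delta(m)\}\varphi_{0}$ with $h(z)=p_{-}(g,\delta(m))\in P_{-}\subseteq G_{\leqslant 0}$ and potential solution $Z_{g,\delta(m)}=p_{-}(g,\delta(m))Q_{0}z\,p_{-}(g,\delta(m))^{-1}$.

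Next I would check that these are genuine wave matrices by verifying the hypotheses of Proposition \ref{P2.1}; this is exactly the computation carried out just before the statement. For (a): differentiating the decomposition $u_{-}(g,\delta(m))\delta(m)\gamma_{0}(t)$ of $\Psi_{g,\delta(m)}$ gives $\partial_{i}(\Psi_{g,\delta(m)})=M_{i}\Psi_{g,\delta(m)}$ with $M_{i}=\partial_{i}(u_{-})u_{-}^{-1}+z^{i}$ of order $i$ in $z$, while differentiating the decomposition $q_{+}(g,\delta(m))\delta(m)g^{-1}$ gives $\partial_{i}(\Psi_{g,\delta(m)})=\{\partial_{i}(q_{+})q_{+}^{-1}\}\Psi_{g,\delta(m)}$ with the bracketed factor of order $\geqslant 0$; invertibility of $\Psi_{g,\delta(m)}$ forces the two to agree, so $M_{i}\in{\rm gl}_{2}(R)[z, z^{-1})_{\geqslant 0}$, and Proposition \ref{P2.1}(a) yields $M_{i}=(Q_{g,\delta(m)}z^{i})_{\geqslant 0}$ together with the wave-matrix property, hence that $Q_{g,\delta(m)}$ solves the AKNS hierarchy. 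The strict case (b) is identical, comparing the decompositions $p_{-}(g,\delta(m))\delta(m)\gamma_{1}(t)$ and $w_{+}(g,\delta(m))\delta(m)g^{-1}$ and using that $w_{+}$ has zeroth order term $\Id$, so that the relevant bracket has order $\geqslant 1$; Proposition \ref{P2.1}(b) then gives $N_{i}=(Z_{g,\delta(m)}z^{i-1})_{>0}$ and that $Z_{g,\delta(m)}$ solves the strict AKNS hierarchy.

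It remains to prove the invariance formulas, and here I would simply read off the $U_{-}$- (resp.\ $P_{-}$-) factor of the transformed wave matrix. Writing $\delta=\delta((1,1))$, centrality of $\delta^{k}$ gives $\Psi_{g,\delta^{k}\delta(m)}=\Psi_{g,\delta(m)}\delta^{k}$, so the $U_{-}$-factor, and with it $Q$, is unchanged. For the other transformation, the computation preceding the statement shows $(\gamma^{c}_{0}(s)g\delta(-m)p_{1}\delta(m),\delta(m))\in S_{0}$ with $\Psi_{\gamma^{c}_{0}(s)g\delta(-m)p_{1}\delta(m),\delta(m)}=u_{-}\gamma^{c}_{0}(s)^{-1}\delta(m)\gamma_{0}(t)$, so its $U_{-}$-factor is $u_{-}\gamma^{c}_{0}(s)^{-1}$; since $\gamma^{c}_{0}(s)=\exp(\sum_{k\geqslant 1}s_{k}Q_{0}z^{-k})$ commutes with $Q_{0}$, conjugating $Q_{0}$ by $u_{-}\gamma^{c}_{0}(s)^{-1}$ is the same as conjugating by $u_{-}$, which gives $Q_{g,\delta(m)}$. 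Combining the two yields the displayed identity in (a), and (b) follows in exactly the same way from $\Phi_{\gamma^{c}_{1}(s)g\delta(-m)u_{1}\delta(m),\delta(m)}=p_{-}\gamma^{c}_{1}(s)^{-1}\delta(m)\gamma_{1}(t)$ and the fact that $\gamma^{c}_{1}(s)\in P_{-}$ commutes with $Q_{0}z$.

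The routine part of all this is the bookkeeping with the two Birkhoff-type decompositions. The step that genuinely needs care — and which is supplied by the loop-group set-up of Section \ref{constr} — is the legitimacy of the settings $R_{0}(g,\delta(m))$ and $R_{1}(g,\delta(m))$: one must know that $\Gamma_{0}(g,\delta(m))$ and $\Gamma_{1}(g,\delta(m))$ are nonempty open subsets of $\Gamma_{0}$ resp.\ $\Gamma_{1}$, and that the factors $u_{-},p_{+}$ (resp.\ $p_{-},u_{+}$) produced by the factorization depend analytically on $t$, so that the derivations $\partial_{i}=\partial/\partial t_{i}$ act on their matrix coefficients and $\partial_{i}(u_{-})u_{-}^{-1}$ indeed has strictly negative order in $z$. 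This rests on the openness of the big cell $\Omega$ in $L_{an}\GL_{2}(\mathbb{C})$ and the analyticity of the factorization map; once these are in place the Theorem follows by collecting the observations above.
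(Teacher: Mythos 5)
Your proposal is correct and follows essentially the same route as the paper, whose ``proof'' of Theorem \ref{T4.1} is precisely the discussion preceding it: identify $\Psi_{g,\delta(m)}$ and $\Phi_{g,\delta(m)}$ as oscillating matrices of type $\delta(m)$, compare the two decompositions in (\ref{wmAKNS}) resp.\ (\ref{wmSAKNS}) to verify the hypotheses of Proposition \ref{P2.1}, and read off the invariance statements from the transformed wave matrices $u_{-}\gamma^{c}_{0}(s)^{-1}\delta(m)\gamma_{0}(t)$ and $p_{-}\gamma^{c}_{1}(s)^{-1}\delta(m)\gamma_{1}(t)$ together with the fact that $\gamma^{c}_{0}(s)$ commutes with $Q_{0}$ and $\gamma^{c}_{1}(s)$ with $Q_{0}z$. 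Your closing remark on the analyticity of the Birkhoff-type factorization and the openness of the big cell makes explicit a point the paper leaves implicit, but it is a clarification rather than a different argument.
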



\begin{thebibliography}{10}

%%%%%%%%%%%%%%%%%%%%%%%%%%%%%%%%%%%%%%%%%%%%%%%%%%%%%%%%%%%%%
%                                                           %
% Command to used is:-                                      %
%                                                           %
%  \bibitem{REFERENCE_LABEL} AUTHORS NAMES,                 %
%  {\it JOURNAL'S NAMES}{\bf VOLUME NUMBER}, PAGE (YEAR).   %
%                                                           %
%  See example below.                                       %
%                                                           %
%%%%%%%%%%%%%%%%%%%%%%%%%%%%%%%%%%%%%%%%%%%%%%%%%%%%%%%%%%%%%

\bibitem{A-vM-VH} M. Adler, P. van Moerbeke, P. Vanhaecke, {\em Algebraic Integrability, Painlev\'e Geometry and Lie Algebras}, Ergebnisse der Mathematik und ihrer Grenzgebiete, volume 47, Springer Verlag.

\bibitem{AKNS} M.~J. Ablowitz, D.~J.Kaup, A.~C. Newell, H. Segur,  {\em The inverse scattering transform-Fourier analysis for nonlinear problems},
Studies in Appl. Math. 53 (4) (1974): 249-315.


\bibitem{DJKM83} E. Date, M. Jimbo, M. Kashiwara, T. Miwa,
{\em Transformation groups for soliton equations}, Proceedings of the RIMS symposium on nonlinear integrable systems-Classical Theory and Quantum Theory, ed. M.Jimbo and T.Miwa, World Sci.Publishers, Singapore.


\bibitem{FNR} H. Flashka, A.~C.Newell, T. Ratiu, \emph{Kac-Moody Lie Algebras and soliton equations II; Lax equations associated with $A_1^{(1)}$}, Physica 9D, p.300-323.


\bibitem{Hamilton82} R.~S. Hamilton, {\em The inverse function theorem of Nash and Moser.} Bull. Amer. Math. Soc. (N.S.) 7 (1982), no. 1, 65-222.


\bibitem{HHP} G.~F. Helminck, A.~G. Helminck, E.~A. Panasenko,  {\em Integrable deformations in the algebra of pseudo differential operators from a Lie algebraic perspective}, Theoretical and Mathematical Physics {\bf 174}(1): 134-153 (2013).


\bibitem{PS} A. Pressley, G.Segal, {\em Loop groups}, Oxford Mathematical Monographs, Clarendon Press 1986.

\bibitem{SW} G. Segal and G. Wilson, {\em Loop groups and equations of
KdV type}, Publ. Math. IHES {\bf 63} (1985), 1--64.


\end{thebibliography}
\end{document}